\documentclass[a4paper,UKenglish]{lipics}

\usepackage{mathtools}
\usepackage{enumerate}
\usepackage{multirow}
\usepackage{tikz}
\usetikzlibrary{positioning,automata}
\usetikzlibrary{decorations.pathreplacing} 

\usepackage{color, colortbl}
\definecolor{Gray}{gray}{0.92}
\definecolor{Grayish}{gray}{0.95}

\usepackage{extarrows}

\newcommand{\Z}{\mathbb{Z}}
\newcommand{\N}{\mathbb{N}}
\newcommand{\T}{\mathcal{T}}

\newcommand{\bv}{\mathbf v}

\newcommand{\bx}{\mathbf x}
\newcommand{\by}{\mathbf y}

\newcommand{\add}{\textsc{Add}}
\newcommand{\move}{\textsc{Move}}
\newcommand{\ch}{\textsc{Check}}

\newcounter{temp}

\newcounter{reach}
\newcounter{degree}

\begin{document}

\title{Undecidability of Two-dimensional Robot Games}

\author[1]{Reino Niskanen}
\author[1]{Igor Potapov}
\author[2]{Julien Reichert}
\authorrunning{R.~Niskanen, I.~Potapov, and J.~Reichert}

\affil[1]{Department of Computer Science, 
University of Liverpool, UK \\  
 \texttt{$\{$r.niskanen,potapov$\}$@liverpool.ac.uk}
	}

\affil[2]{LSV, ENS Cachan, France \\
 \texttt{reichert@crans.org}
	}
	
\maketitle

\begin{abstract}
Robot game is a two-player vector addition game played on the integer lattice $\Z^n$. Both players have sets of vectors and in each turn the vector chosen by a player is added to the current configuration vector of the game. One of the players, called Eve, tries to play the game from the initial configuration to the origin while the other player, Adam, tries to avoid the origin. The problem is to decide whether or not Eve has a winning strategy.
In this paper we prove undecidability of the robot game in dimension two answering the question formulated by Doyen and Rabinovich in 2011 and closing
the gap between undecidable and decidable cases.
\keywords{reachability games, vector addition game, decidability, winning strategy}
\end{abstract}

\section{Introduction}

In the modern world  the reliability of a software code and verification of the correct functionality of complex technological devices
require  the analysis of various interactive processes and open systems, where it is important to
take into account the effects of uncontrollable adversaries, such as environment or malicious users.
Computational games provide a good framework to model interactive processes and
the extensions of classical reachability problems to game schemes, studied in different contexts and settings,
have recently garnered considerable interest \cite{ABO08,AMSS13,BBE10,BJK2010,CD12,FJLS11,Reichert}.

In this paper we study two-player games where the main problem is to decide which of the players wins 
based on a given set of eligible moves, a computational environment and reachability objectives.
Following early results for games on VASS (Vector Addition Systems with States)\footnote{A game is played on a graph with states of player 1 and states of player 2,
with $\N^2$ as the vector space.} \cite{CSL2003,BJK2010}, Doyen and Rabinovich formulated an open problem 
about the simplest version of  games  (\emph{robot games}) for which the decidability was unknown \cite{RobotGames}. 
Robot games are two-player games played by updating a vector of $n$ integer counters. Each of the players, called Adam and Eve, has a finite set of vectors in $\Z^n$. A play starts from a given initial vector $\bx_0\in\Z^n$, and proceeds in rounds. During each round, first Adam adds a vector from his set, followed by Eve doing the same. Eve wins when, after her turn,  the vector is the zero vector. A simple example of the game is illustrated below.

\begin{figure}[htb]
\centering\begin{tikzpicture}[scale=0.6] 
\draw[gray,thick] (0.9,1)--(4.1,1);
\draw[gray,thick] (0.9,0.5)--(4.1,0.5);
\draw[gray,thick] (0.9,0)--(4.1,0);
\draw[gray,thick] (0.9,-0.5)--(4.1,-0.5);
\draw[gray,thick] (0.9,-1)--(4.1,-1);
\draw[gray,thick] (0.9,-1.5)--(4.1,-1.5);
\draw[gray,thick] (0.9,-2)--(4.1,-2);

\draw[gray,thick] (1,-2.1)--(1,1.1);
\draw[gray,thick] (1.5,-2.1)--(1.5,1.1);
\draw[gray,thick] (2,-2.1)--(2,1.1);
\draw[gray,thick] (2.5,-2.1)--(2.5,1.1);
\draw[gray,thick] (3,-2.1)--(3,1.1);
\draw[gray,thick] (3.5,-2.1)--(3.5,1.1);
\draw[gray,thick] (4,-2.1)--(4,1.1);

\draw[-stealth] (0.9,0.5) -- (4.2,0.5);
\draw[-stealth] (3.5,-2.1) -- (3.5,1.2);

\node [rectangle,draw,minimum size=0.3cm,scale=0.5] (A2) at (3.5,0.5) {};
\node [rectangle,draw,minimum size=0.3cm,scale=0.5] (A1) at (2,-1.5) {};
\node [circle,draw,scale=0.5] (E1) at (3,-1.5) {};
\node [circle,draw,scale=0.5] (E2) at (2.5,-0.5) {};
\path[-stealth] (A1) edge (E1)
	(A1) edge (E2)
	(E1) edge (A2)
	(E2) edge (A2);
	\node [align=right,right] at (6,-0.3) {\parbox{4.5cm}{
Adam's moves: $\{(1,2),(2,0)\}$ \\
\phantom{a} \\
Eve's moves: $\{(2,2),(1,4)\}$}};
\end{tikzpicture}
\end{figure}

\noindent We say that Eve has a winning strategy if she eventually can reach the zero vector replying on any Adam move and
Adam has a winning strategy otherwise. Thus a winning strategy gives a way for a player to win, regardless of the way the opponent plays. 
Previously, it has been proved that deciding the winner in one-dimensional robot games is EXPTIME-complete \cite{ArulReichert}.

In this paper we consider the open problem of deciding the winner of robot games for dimension $n=2$ and show that it is undecidable
to check which of the players has a winning strategy in a two-dimensional robot game, i.e., in a very restricted fragment of counter reachability games
with stateless players playing in integer grid $\Z^2$.
The basis of proofs are 2-counter Minsky machines (2CM) for which the halting problem is undecidable. For a 2-counter machine, we construct a game where Eve has to simulate the machine and Adam verifies that Eve does not cheat. The intuition is that the counters of the machine are multiplied by constants and represented by two-dimensional vectors. Additionally, the states of the machine are encoded in the least significant digits of the vectors.
We analyse all the possible deviations from simulating the counter machine and show that the opponent has a winning strategy in that case. The biggest challenge is to ensure that all possible ways to cheat can be caught without introducing new ways to cheat for the other player.

We prove the main theorem by 
considering the undecidable problem of determining whether a 2CM $\mathcal{M}$ reaches a configuration where both counters are zero. In Section~\ref{sec:2RGS}, we construct a robot game with states that follows the computation of $\mathcal{M}$. 
As the game lacks the ability to perform directly zero checks, instead Adam has a move allowing him to check whether a counter is positive or not leading, deterministically, either to his victory with a correct guess or to his loss otherwise.
In the fourth section, we map the states and state transitions into integers and embed them into the least significant digits in vectors of a two-dimensional robot game. Our proof uses two successive reductions making the proof shorter and more intuitive in contrast to a direct reduction from 2CM that would lead to a longer proof with significantly more cases to consider.


Apart from the solution of the open problem, the main contribution of this paper is a collection of new, original encodings and constructions
that allow simulating zero-checks and state space of a universal machine within a minimalistic two-dimensional system of two non-deterministic stateless players.

{\bf Previous  research:}
Robot games are subfamily of counter reachability games where the game is played on a graph with vertices partitioned between players. It has been proved that deciding the winner in two-dimensional counter reachability games is undecidable \cite{Reichert}. Our result can be seen as strengthening of this as our arena is a graph without self-loops and with one vertex for each player, i.e., both players are stateless.

In \cite{ABO08} and \cite{BJK2010}, VASS games, where the game is played on a graph and counters are always positive, were considered. It was proven that already in two dimensions it is undecidable who wins if the goal is to reach a particular vertex with counter $(0,0)$. On the other hand, if it can be any vertex, then the problem is $(k-1)$-EXPTIME for a game with $k$ counters. Later, the result was improved to PTIME for $k=2$ \cite{C13}. In \cite{Reichert}, the possible counter values were extended to all integers and it was proven that the problem remains undecidable. Hunter considered the variants of games, where updates on the counters are done in binary, and showed that one-dimensional games are EXPSPACE-complete \cite{H15}. While these games have reachability objectives, it is also possible to extend the objectives of the games to energy constrains \cite{FJLS11} or parity constrains \cite{AMSS13,CD12}.

The proofs of undecidability of VASS games and counter reachability in two dimensions in \cite{CSL2003,BJK2010} use the state structure of 
the game to embed the state structure of a 2-counter machine. In this sense, our result on robot games with states is comparable, as Eve simulates the state transitions of a 2-counter machine with her underlying automaton. On the other hand, the stateless game is essentially different as we have to represent state transitions with integers. When simulating a two-counter machine, it is possible for Eve to make a wrong move and then Adam is able to ensure his victory from this point onward. In robot games with states Adam's cheat catching ability is different from when a game is played on a graph. Also in robot games, Eve's state is dependent only on her previous moves, while in VASS games or counter reachability games, Adam's moves effect which state Eve enters.


\section{Notation and Definitions}
We denote the set of all integers by $\Z$ and the set of all non-negative integers by $\N$. By $0_n$ we denote a $n$-dimensional zero vector.

A \emph{counter reachability game} (CRG) consists of a directed graph $G=(V,E)$, where the set of vertices is partitioned into two parts, $V_1$ and $V_2$, each edge $e\in E\subseteq V\times\Z^n\times V$ is labelled with vectors in $\Z^n$, and an initial vector $\bx_0\in\Z^n$. A \emph{configuration} of the game is $(v,\bx)$, a successive configuration is $(v',\bx+\bx')$, where an edge $(v,\bx',v')\in E$ is chosen by player 1 if $v\in V_1$ or by player 2 if $v\in V_2$. A \emph{play} is a sequence of successive configurations. The goal of the first player, called \emph{Eve}, is to reach the \emph{final configuration} $(v_f,0_n)$ for some $v_f\in V$ while the goal of the second player, called \emph{Adam}, is to keep Eve from reaching $(v_f,0_n)$. A \emph{strategy} for a player is a function that maps a configuration to an edge that can be applied. We say that Eve has a \emph{winning strategy} if she can reach the final configuration regardless of the strategies of Adam. On the other hand, we say that Adam has a winning strategy if Eve does not have a winning strategy. In the figures we use $\bigcirc$ for Eve's states and $\Box$ for Adam's states.

A \emph{robot game} (RG) \cite{RobotGames} is a special case of the counter reachability games, where the graph consists of only two vertices, $q_0$ of Adam and $q$ of Eve. The goal of the game is the configuration $(q_0,0_n)$. That is, a robot game consists of two players, \emph{Eve} and \emph{Adam}, having a set of vectors $E$, $A$ over $\Z^n$, respectively, and an \emph{initial vector} $\bx_0$. Starting from $\bx_0$ players add a vector from their respective sets to the current configuration of the game in turns. As in counter reachability games, Eve tries to reach the origin while Adam tries to keep Eve from reaching the origin. The decision problem concerning robot games is, for a given robot game $(A,E)$ and $\bx_0$, to decide whether Eve has a winning strategy to reach $0_n$ from $\bx_0$. The problem is EXPTIME-complete in dimension one \cite{ArulReichert} and was open for dimension two.

An extension of robot games where players have control states is called \emph{robot games with states} (RGS). We consider only the games where Adam's state structure is trivial, i.e., he has only one state and all moves are self-loops. RGS consists of $(A,E)$ where $A$ is a finite subset of $\Z^n$ that Adam can apply during his turn and $E$ is a finite subset of $V\times \Z^n \times V$ of Eve. The configuration is now a pair $(s,\bv)$ consisting of Eve's control state $s$ and a counter vector $\bv\in\Z^n$. Eve updates her control state when she makes a move: in the configuration $(s,\bv)$, for any vector $\bv$, only moves of the form $(s,\bx,t)$ are enabled, and with one such move the new configuration is $(t,\bv + \bx)$. Eve wins if, and only if, after her turn, the configuration is $(s,(0,0))$ for any $s\in V$. The decision problem associated with robot games with states asks whether Eve has a winning strategy from a given configuration. 

A \emph{Minsky machine}, introduced in \cite{M67}, is a simple computation model that is crucial in our proof.  A \emph{deterministic two-counter Minsky machine} (2CM) is a pair $(Q,T)$, where $Q$ is a finite set of states and 
$T\subseteq Q\times\{c_i{\scriptscriptstyle ++},c_i{\scriptscriptstyle --},c_i{\scriptstyle ==}0\mid i=1,2\} \times Q$ is a finite set of labelled transitions to increment, decrement or test for zero one of the counters. In a deterministic two-counter Minsky machine, the set $Q$ contains an initial state $s_0$ and a sink state $\bot$, such that there is no outgoing transition from $\bot$. Moreover, from all $s\in Q\setminus\{\bot\}$, either there is only one outgoing transition with the label $c_1{\scriptscriptstyle ++}$ or $c_2{\scriptscriptstyle ++}$, or there are exactly two outgoing transitions with respective labels $c_1{\scriptscriptstyle --}$ and $c_1{\scriptstyle ==}0$, or $c_2{\scriptscriptstyle --}$ and $c_2 {\scriptstyle ==}0$. A configuration of a 2CM is a pair $(s,(y,z))\in Q\times\N^2$, representing a state and a pair of counter values. The run of a 2CM is a finite or infinite sequence of configurations that starts from $(s_0 ,(0,0))$ and follows the transitions of the machine incrementing and decrementing the counters according to the labels. As usual, a transition with a label $c_i{\scriptstyle ==}0$ can only be taken when the counter $i$ is zero and a transition with a label $c_i{\scriptscriptstyle --}$ can only be taken when the counter $i$ is positive.

Note that there is only one possible run in a deterministic two-counter Minsky machine. Indeed, when there are two outgoing transitions, only one of them can be executed, depending on the value of the counter that the transitions update or test for zero. The halting problem of 2CM is to decide, given a 2CM, whether the run reaches a configuration with state $\bot$, in other words whether the run halts. This problem is known to be undecidable for deterministic two-counter machines \cite{M67}.
Another well-known undecidable problem for 2CM is whether a configuration where both counters are zero is reachable. The undecidability follows from the halting problem by modifying a 2CM to ensure that both counters are zero only in the halting state; see for example \cite{Reichert-phd15} for a proof.

\setcounter{reach}{\value{theorem}}
\begin{theorem}\label{Reach2CM}
Let $(Q,T)$ be a deterministic two-counter machine. It is undecidable whether in the run of $(Q,T)$, a configuration in $Q\times\{(0,0)\}$ appears.
\end{theorem}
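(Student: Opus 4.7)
The plan is to reduce from the halting problem for deterministic 2CM, which Minsky proved undecidable. Given a 2CM $\mathcal{M}=(Q,T)$ with initial state $s_0$ and halt state $\bot$, I will construct a deterministic 2CM $\mathcal{M}'$ with the property that $\mathcal{M}$ halts if and only if, after the first step, a configuration in $Q'\times\{(0,0)\}$ appears in the run of $\mathcal{M}'$.

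The main ingredient is the shifted-counter encoding: a configuration $(q,(y,z))$ of $\mathcal{M}$ is represented in $\mathcal{M}'$ by $(\tilde q,(y+1,z+1))$. Because of the $+1$ offsets, both counters of $\mathcal{M}'$ remain strictly positive throughout any faithful simulation of $\mathcal{M}$, so the pair $(0,0)$ cannot appear during the simulation. The construction of $\mathcal{M}'$ proceeds in three phases. First, a \emph{setup} phase leaves the initial configuration $(s_0',(0,0))$ by two increments, reaching $(\tilde s_0,(1,1))$, which encodes the starting configuration of $\mathcal{M}$. Second, a \emph{simulation} phase mirrors $\mathcal{M}$'s transitions: increments are translated literally, and a branching state of $\mathcal{M}$ with labels $c_i{\scriptscriptstyle --}$ and $c_i{\scriptstyle ==}0$ is implemented by a small gadget that first decrements $c_i$ in $\mathcal{M}'$, then tests $c_i=0$ to determine whether the encoded value was $0$ or positive, and finally re-increments $c_i$ to arrive at the correct encoded successor configuration. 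Throughout this gadget the other counter is untouched and stays positive, so both counters are never simultaneously zero. Third, a \emph{cleanup} phase, triggered when the encoding of $\bot$ is reached, drains $c_1$ and then $c_2$ via standard decrement-and-zero-test loops, arriving at a fresh halt state $\bot'$ with configuration $(\bot',(0,0))$.

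Correctness is immediate: if $\mathcal{M}$ halts, $\mathcal{M}'$ eventually executes the cleanup and reaches $(q,(0,0))$ configurations; if $\mathcal{M}$ does not halt, the simulation runs forever and by the shifted-counter invariant no post-initial configuration of $\mathcal{M}'$ has both counters zero. The only tedious verification is that $\mathcal{M}'$ conforms to the deterministic 2CM format (each non-halt state carries either a single increment or a matching decrement/zero-test pair), which is a simple bookkeeping check on each gadget. I do not anticipate a genuine conceptual obstacle; the most delicate point is handling the trivial initial configuration $(s_0',(0,0)) \in Q'\times\{(0,0)\}$, which is unavoidable under the strict reading of the statement and is sidestepped by phrasing the reduction as reaching such a configuration at a later step.
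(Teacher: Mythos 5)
Your proposal is correct and follows essentially the same route as the paper's own proof: shift both counters by one so that $(0,0)$ never occurs during a faithful simulation, implement each decrement/zero-test pair by temporarily shifting the relevant counter down, testing, and shifting back up, and append a draining gadget that empties both counters exactly when $\bot$ is reached. The paper handles the trivial initial configuration the same way you do, by asking whether the run \emph{returns} to a configuration in $Q\times\{(0,0)\}$ after the first step.
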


We can assume that the first move of a 2CM is an increment of either $c_1$ or $c_2$. Indeed, otherwise the problem is trivial as the second configuration is in $Q\times\{(0,0)\}$.

\section{Robot games with states in two dimensions}\label{sec:2RGS}
In this section we prove that the decision problem for robot games with states is undecidable. We show that for each two-counter machine, there exists a corresponding robot game with states where Eve has a winning strategy if and only if the machine reaches a configuration where both counters are zero. The game lacks the ability to perform zero checks present in two-counter machines, instead Adam has a move allowing him to check whether a counter is positive or not.

\begin{theorem}\label{2CMiff2RGS}
Let $(Q,T)$ be a two-counter machine. There exists a two-dimensional robot game with states $(A,E)$ where Eve has a winning strategy if and only if $(Q,T)$ reaches a configuration in $Q\times\{(0,0)\}$.
\end{theorem}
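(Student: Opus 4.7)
I plan to reduce from the undecidable problem given by Theorem~\ref{Reach2CM}: deciding whether the unique run of a deterministic 2CM $(Q,T)$ ever visits a configuration in $Q \times \{(0,0)\}$. In the RGS I construct, Eve's state space mirrors the 2CM state space augmented with a small number of intermediate states used by the zero-test gadgets, and the two coordinates of the counter vector directly encode the two 2CM counters. Adam's move set will consist of the neutral vector $(0,0)$ together with a small number of \emph{challenge} vectors, one for each kind of zero-test.

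For each non-branching transition $s \to t$ of the 2CM, Eve will have a single edge $(s, \bv, t)$ with $\bv \in \{\pm \be_1, \pm \be_2\}$ according to the label. A zero-test state $s$ with transitions labelled $c_i{\scriptstyle ==}0$ to $t_0$ and $c_i{\scriptscriptstyle --}$ to $t_-$ is handled by a \emph{claim-and-challenge} gadget: Eve first moves from $s$ to one of two intermediate states $s^{=}$ (declaring $c_i=0$) or $s^{>}$ (declaring $c_i>0$, while subtracting $\be_i$). In the next round Adam either plays neutrally --- letting Eve complete the simulated step by transitioning to $t_0$ or $t_-$ --- or plays a challenge vector that sends the game into a small verification subgame engineered so that Eve can reach the zero vector (and thus win) if and only if her just-declared claim was truthful.

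The halting state $\bot$ will have no outgoing edges, so Eve wins exactly when she reaches a configuration with counter vector $(0,0)$ after her move, which by the assumption on the first transition of the 2CM is equivalent to the 2CM visiting $Q \times \{(0,0)\}$. Correctness then has two directions. If the 2CM visits $Q \times \{(0,0)\}$, Eve wins by honestly simulating the run, since every Adam challenge loses against a truthful claim. Conversely, if it does not, any Eve strategy must at some round deviate from the honest run, and her first deviation is a lie at a zero-test; Adam's optimal counter-strategy plays neutrally until that point and then plays the matching challenge, winning the subgame deterministically.

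The hard part will be designing the verification subgame so that its outcome depends only on the truth of the specific claim just made, while also ensuring that Adam's challenge vectors --- which remain available in every round --- cannot be profitably played by Adam outside the intended claim states, and cannot be exploited by Eve either. I plan to address this by arranging each challenge vector so that it ejects Eve into a region of the counter space from which her only path back to a zero-vector configuration runs through the matching check sub-gadget, whose outcome is determined by the sign of the tested counter. The detailed case analysis verifying these properties across all of Eve's states will form the bulk of the proof.
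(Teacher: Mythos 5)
Your high-level game plan coincides with the paper's: Eve simulates the 2CM, a stateless Adam may either stay neutral or ``challenge'', and the analysis splits into the same four outcomes (honest simulation with a neutral Adam; a lie punished by a challenge; an unwarranted challenge handing Eve the win; a spontaneous defence move punished by a neutral Adam). This is exactly the decomposition of Lemmas~\ref{2rgshalt}--\ref{2rgseveresponds}. However, what you defer to ``the detailed case analysis'' is not bookkeeping --- it is the entire mathematical content of the theorem, and your sketch of it does not yet contain the idea that makes it work. The crux is: how can a \emph{stateless} Adam, whose challenge vectors are available at every round and are pure counter updates, (i) force an Eve who has just lied into a losing position, (ii) be unable to harm an honest Eve by playing a challenge at an arbitrary moment, and (iii) be unable to have his challenge silently absorbed by Eve into her ordinary simulation moves? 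If the two coordinates ``directly encode the two 2CM counters'', as you propose, then a challenge vector such as $\pm\be_i$ is arithmetically indistinguishable from a normal increment or decrement, and Eve can neutralise it without ever entering your verification subgame; conversely, nothing stops Adam from injecting it mid-simulation to corrupt an honest run. You name this difficulty but do not resolve it.

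The paper resolves it by \emph{not} encoding the counters directly: the first counter is scaled by $4$, so that legitimate configurations live in residue class $0 \pmod 4$, Adam's single challenge $(1,0)$ moves the play to residue $1$, and Eve's only reply that restores a winnable residue is a dedicated $(-1,0)$ connector into an emptying gadget. Crucially, that gadget is indexed by positivity \emph{flags} carried in Eve's states and updated at every transition (not only at zero tests): from $\top_{0b}$ no move touches the first counter and from $\top_{+b}$ every move subtracts $4$ from it, so Eve can drive the vector to $(0,0)$ from the gadget if and only if her flags were truthful at the moment of the challenge. This single mechanism simultaneously delivers (i)--(iii), and it also explains why one challenge vector suffices for both counters, rather than one per kind of zero test as you propose. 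Your proposal would become a proof only once you supply a concrete substitute for this residue-plus-flag mechanism; as written, the claim that the verification subgame ``can be engineered'' with the stated properties is precisely what needs to be proved.
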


The idea is that in the robot game with states, Eve simulates the computation of the 2CM while Adam does not interfere with the computation. If one of the players deviates from the computation, the opponent has a winning strategy from that point on. 

Essentially, there are four ways the game can progress. These ways are depicted in the Figure~\ref{2rgs4ways}. Three of the outcomes have a predetermined winner which does not depend on the 2CM. In the last case where Eve correctly simulates the 2CM and Adam does not interfere (plays only a \textsc{0-move}), the winner depends on whether the 2CM reaches $(q,(0,0))$ for some $q\in Q$ or not.
\begin{itemize}
\item If Eve's move corresponds to the \textsc{simulation} of the 2CM and Adam replies with a \textsc{0-move} (a move that does not modify the counters), then iteratively applying only this turn-based interaction, Eve wins if and only if the 2CM reaches $(q,(0,0))$ for some $q\in Q$ (Lemma~\ref{2rgshalt}).
\item If Eve's move incorrectly simulates the 2CM, then Adam has a winning strategy from this moment on, starting with a \textsc{positivity check} that makes Eve's target unreachable (Lemma~\ref{2rgsevecheats}).
\item On the other hand, if Adam plays his \textsc{positivity check} following a correct simulating move of Eve, then Eve has a winning strategy from this moment on, starting with an \textsc{emptying move} allowing Eve to empty both counters and reach $(0,0)$ (Lemma~\ref{2rgsadamcheats}).
\item This leads to the possibility that Eve plays an \textsc{emptying move} instead of a \textsc{simulating move}, in that case Adam has a winning strategy starting by playing his \textsc{0-move} (Lemma~\ref{2rgseveresponds}).
\end{itemize}
\begin{figure}[htb]
\centering
\begin{tikzpicture}[scale=.8, every node/.style={scale=0.8}]
\path [fill=Grayish] (-0.5,-0.94) rectangle (9.1,1.5);
\draw [dotted] (-0.5,-0.94) -- (9.1,-0.94) -- (9.1,1.5) -- (-0.5,1.5) -- (-0.5,-0.94);
\node [circle,draw] (Eve) {$E$};
\node [rectangle,draw] (Adam)[right = 2cm of Eve] {$A$};
\node [rectangle,draw] (wrong)[left = 2cm of Eve] {$A$};
\node (AdamW1) [below = 2.7cm of wrong] {Adam wins};
\node [rectangle,draw](empty) [below = 1.2cm of Eve] {$A$};
\node (AdamW2) [below = 2.6cm of Eve] {Adam wins};
\node [circle,draw] (emptying) [below =1.3cm of Adam] {$E$};
\node (EveW) [below = 2.7cm of Adam] {Eve wins};

\draw (Eve) edge  [-stealth, bend left=30] node[align=center,above]{simulation\\(correct)} (Adam);
\draw (Adam) edge  [-stealth, dashed, bend left=30] node[below]{0-move} (Eve);
\draw (Eve) edge  [-stealth, bend right=30] node[align=center,above]{simulation\\(incorrect)} (wrong);
\draw (wrong) edge  [-stealth, dashed] node[align=right, left]{positivity\\check} (AdamW1);
\draw (Eve) edge  [-stealth] node[align=right, left, pos=0.65]{emptying\\move} (empty);
\draw (empty) edge  [-stealth, dashed] node[align=right, left]{0-move} (AdamW2);
\draw (Adam) edge  [-stealth,dashed] node[align=left, right, pos=0.67]{positivity\\check} (emptying);
\draw (emptying) edge  [-stealth,] node[align=left, right,pos=0.4]{emptying\\move} (EveW);
\node [align=right,right] at (6,-2.6) {\parbox{3.7cm}{Eve's moves:
\begin{itemize}
\item \textsc{simulation} of 2CM (correct/incorrect)
\item \textsc{emptying move}
\end{itemize}
Adam's moves:
\begin{itemize}
\item \textsc{0-move}
\item \textsc{positivity check}
\end{itemize}}};
\node [align=left] at (6.3,0.3) {Eve wins if \\ \qquad 2CM reaches $Q\times(0,0)$ \\ Adam wins if \\ \qquad2CM does not reach $Q\times(0,0)$};
\end{tikzpicture}
\caption{\label{2rgs4ways}Progress of 2RGS}
\end{figure}

Before presenting the detailed constructions of Eve's and Adam's state spaces, we consider a simple modification to a 2CM, making it non-deterministic. For any 2CM $(Q,T)$, we construct a 2CM $(Q',T')$ where $Q'$ is $Q$ with additional information on positivity of the both counters and $T'$ is like $T$ with guards ensuring that the extra information in states of $Q'$ correspond to the actual values of the counters. We denote the states of $Q'$ by $s_{ab}$ where $a,b\in\{0,+\}$ are flags indicating whether the value of a counter is positive or equal to 0, i.e., $a$ ($b$) is $+$ if the first (second) counter is positive or $0$ if the counter is zero. The transition set $T'$ consists of the following sets
{\small\begin{align*}
&\hspace*{-6.5pt}\left\{(s_{ab},c_1{\scriptscriptstyle ++},t_{+b})\mid (s,c_1{\scriptscriptstyle ++},t)\in T, a,b\in\{0,+\}\right\}, \left\{(s_{ab},c_2{\scriptscriptstyle ++},t_{a+})\mid (s,c_2{\scriptscriptstyle ++},t)\in T, a,b\in\{0,+\}\right\}, \\
&\hspace*{-7.5pt}\left\{(s_{+b},c_1{\scriptscriptstyle --},t_{ab})\mid (s,c_1{\scriptscriptstyle --},t)\in T, a,b\in\{0,+\}\right\} ,\left\{(s_{a+},c_2{\scriptscriptstyle --},t_{ab})\mid (s,c_2{\scriptscriptstyle --},t)\in T, a,b\in\{0,+\}\right\}, \\
&\hspace*{-6.5pt}\left\{(s_{0b},c_1{\scriptstyle ==}0,t_{0b})\mid (s,c_1{\scriptstyle ==}0,t)\in T, b\in\{0,+\}\right\}, \left\{(s_{a0},c_2{\scriptstyle ==}0,t_{a0})\mid (s,c_2{\scriptstyle ==}0,t)\in T, a\in\{0,+\}\right\}.
\end{align*}}
Now, after decrementing counters from a state with $+$ flag, a state will changed to a state with $+$ or 0 flag depending on the current counter value. 
\begin{center}
\begin{tabular}{ccccl}
counter value & flag & & flag & \\
\hline
\rowcolor{Gray}$c_i>1$ & $+$ & $\to$ & $+$ & \textbf{correct flag} \\
$c_i>1$ & $+$ & $\to$ & 0 & wrong flag\\
$c_i=1$ & $+$ & $\to$ & + & wrong flag\\
\rowcolor{Gray}$c_i=1$ & + & $\to$ & 0 & \textbf{correct flag}\\
\end{tabular}
\end{center}
At the moment we assume that the machine moves to a state with the correct flag (correct simulation) and does not move to incorrect flag (incorrect simulation). Later in the robot game with states, Adam will act as guards (i.e., checks whether $c_i>1$ or $c_i=1$) using his \textsc{positivity check} if Eve picks a wrong transition resulting in a state with the wrong flag.

Now we present the moves of the players. Eve's states are the states of $Q'$, corresponding to the simulation of the 2CM, together with emptying states $\{\top_{00},\top_{+0},\top_{0+},\top_{++}\}$, associated with \textsc{emptying moves}. The moves of Eve correspond to transitions in $T'$ where incrementing and decrementing of the first counter is by 4 rather than by 1.  We call these moves \textsc{simulating moves}, see Figure~\ref{statettrans} in the Appendix:
\begin{center}
\begin{tabular}{|c|c|}
\hline
Transition with $c_1$ & Eve's move \\
\hline
$(s,c_1 {\scriptstyle ++},t)$ & $(s,(4,0),t)$ \\
\hline
$(s,c_1 {\scriptstyle --},t)$ & $(s,(-4,0),t)$ \\
\hline
$(s,c_1 {\scriptstyle ==}0,t)$ & $(s,(0,0),t)$ \\
\hline
\end{tabular} \qquad
\begin{tabular}{|c|c|}
\hline
Transition with $c_2$ & Eve's move \\
\hline
$(s,c_2 {\scriptstyle ++},t)$ & $(s,(0,1),t)$ \\
\hline
$(s,c_2 {\scriptstyle --},t)$ & $(s,(0,-1),t)$ \\
\hline
$(s,c_2 {\scriptstyle ==}0,t)$ & $(s,(0,0),t)$ \\
\hline
\end{tabular}
\end{center}
The other type of moves, \textsc{emptying moves}, are related to the new states and are used to empty the counters. 
Note that there is hierarchy in the emptying states --- Eve cannot move from a state with $0$ to a state with $+$. Let us define the emptying partition of Eve's automaton where for every possible move of Adam there is a cancelling move with additional decrementing of the counters eventually leading to the sink state $\top_{00}$.
\begin{itemize}
\item $\{(\top_{++} ,(-4 - e,-1),t) \mid e \in\{0,1\},t\in\{\top_{++},\top_{+0},\top_{0+},\top_{00}\}\}$;
\item $\{(\top_{+0} ,(-4 - e,0),t) \mid e \in\{0,1\},t\in\{\top_{+0},\top_{00}\}\}$;
\item $\{(\top_{0+} ,(-e,-1),t ) \mid e \in\{0,1\},t\in\{\top_{0+},\top_{00}\}\}$;
\item $\{(\top_{00} ,(- e,0),\top_{00} ) \mid e \in\{0,1\}\}$.
\end{itemize}
Finally, we define transitions connecting the simulating partition of Eve's automaton with the emptying partition. For each state $s_{ab}\in Q'$, Eve has a transition $(s_{ab},(-1,0),\top_{ab})$.

Adam is stateless, i.e., he has one state and his moves are self-loops. There are two types of moves: the \textsc{0-move}, $(0,0)$, with which Adam agrees that Eve simulated the 2CM correctly and the \textsc{positivity check}, $(1,0)$, with which Adam checks whether a flag matches the counter (i.e., Eve simulated incorrectly). 
Control states of the players are depicted in Figure~\ref{eveadamstates}.

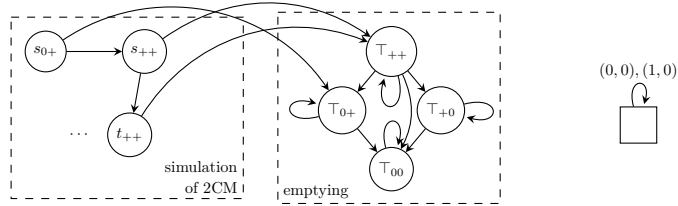
\begin{figure}[htb]
\centering
\begin{tikzpicture}[scale=.65,>=stealth, every node/.style={scale=0.6}]
  \node[state] 		   (s0+) at (0,0)  {$s_{0+}$};
  \node[state]         (s++) at (2,0) {$s_{++}$};
  \node[state]         (t++) at (1.7,-1.7)       {$t_{++}$};
  \node[state]         (top++) at (7,0)       {$\top_{++}$};
  \node[state]         (top0+) at (6,-1.2)       {$\top_{0+}$};
  \node[state]         (top+0) at (8,-1.2)       {$\top_{+0}$};
  \node[state]         (top00) at (7,-2.4)       {$\top_{00}$};
  \path[->] (s0+) edge              (s++)
                  (s++) edge              (t++)
                  (top++) edge            (top+0)
                  		edge [loop below] ()
						edge            (top0+)
                  (top+0) edge            (top00)
                  		edge [loop right] ()
                  (top0+) edge            (top00)
                  		edge [loop left] ()
                  (top00) edge [loop above] ()
                  (s0+) edge [bend left=45] (top0+)
                  (s++) edge [bend left=35] (top++)
                  (t++) edge [bend left=40] (top++)
                  (top++) edge [bend left=25.5] (top00);
\draw [dashed] (-0.7,0.7) -- (4,0.7) -- (4,-3) -- (-0.7,-3) -- (-0.7,0.7);
\draw [dashed] (4.7,0.8) -- (9.2,0.8) -- (9.2,-3.1) -- (4.7,-3.1) -- (4.7,0.8);
\draw node[align=right,above left] at (4,-3) {simulation\\of 2CM};
\draw node[align=left,above right] at (4.7,-3.1) {emptying};
\draw node at (0.7,-1.7) {$\cdots$};

\node [rectangle,draw, minimum size=0.8cm] (Adam) at (12,-1.5) {};
  \path[-stealth] (Adam) edge [loop above] node {$(0,0),(1,0)$}            (s++);
\end{tikzpicture}
\caption{\label{eveadamstates}An illustration of state transitions of Eve and Adam}
\end{figure}

To avoid Eve winning trivially every play in the robot game with states, we do not use $(s'_{00},(0,0))$ as an initial configuration, but instead consider the configuration that is reached in $(Q',T')$ after one step of the run of the machine. We write the configuration after one step $(s_{\overline{a}\overline{b}},(y,z))$ and we define $\overline{a} = +,\overline{b}=0$ if $y=1$ and $\overline{a}=0,\overline{b}= +$ if $y=0$. The initial configuration in the robot game with states is then $(s_{\overline{a}\overline{b}},(4y,z))$.
The effect of simulating moves, emptying moves and positivity check modulo four is depicted in Figure~\ref{2RGSinterval}.

\begin{figure}[htb]
\centering
\begin{tikzpicture}[xscale=1.5,yscale=0.75]
\draw (0,0)--(8,0);
\draw (0,0.15) -- (0,-0.15);
\draw (1,0.1) -- (1,-0.1);
\draw (2,0.1) -- (2,-0.1);
\draw (3,0.1) -- (3,-0.1);
\draw (4,0.15) -- (4,-0.15);
\draw (5,0.1) -- (5,-0.1);
\draw (6,0.1) -- (6,-0.1);
\draw (7,0.1) -- (7,-0.1);
\draw (8,0.15) -- (8,-0.15);
\draw[decorate,decoration={brace,amplitude=12pt}] (0,0) -- (4,0) node[above, pos=0.5,yshift=0.4cm]{$4$};
\draw[decorate,decoration={brace,amplitude=12pt}] (4,0) -- (8,0) node[above, pos=0.5,yshift=0.4cm]{$4$};
\draw[shorten >=2pt,-stealth] (0,0) to [bend right =50] node[below, pos=0.5]{simulating move} (4,0) ;
\draw[shorten >=2pt,-stealth,dashed] (4,0) to [bend left =40] node[below left, pos=0.5,yshift=0.13cm]{positivity check} (3,0) ;
\draw[shorten >=2pt,-stealth] (4,0) to [bend right =50] node[below, pos=0.5]{emptying move} (8,0) ;
\end{tikzpicture}
\caption{An illustration of changes in an interval when simulating or emptying moves of Eve or positivity check of Adam is applied \label{2RGSinterval}}
\end{figure}
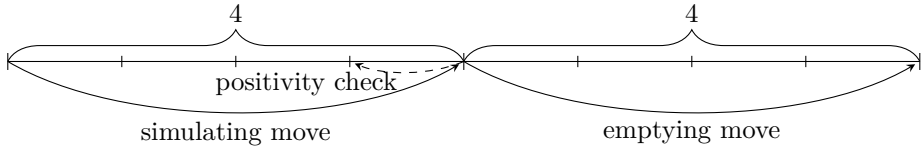

Next we prove which player has a winning strategy in the scenarios presented previously.

\begin{lemma}\label{2rgshalt}
In a sequence where Adam plays only the \textsc{0-move} and Eve plays only correct \textsc{simulating moves}, 
Adam wins if the 2-counter machine does not reach a configuration with zeros in both counters and Eve wins otherwise.
\end{lemma}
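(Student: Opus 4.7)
The plan is to establish a tight simulation correspondence between the play of the robot game under the stated restrictions and the unique run of the two-counter machine $(Q,T)$, and then to read off both directions of the lemma from this correspondence.

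The central step is an invariant proved by induction on the round number $k$: after $k$ full rounds (Adam's move followed by Eve's), the robot game is in a configuration $(s_{ab},(4y,z))$ such that $(s,(y,z))$ is exactly the $(k{+}1)$-st configuration in the run of $(Q,T)$, and the flags are faithful, i.e.\ $a={+}$ iff $y>0$ and $b={+}$ iff $z>0$. The base case $k=0$ holds by the explicit definition of the initial configuration $(s_{\overline{a}\overline{b}},(4y,z))$. For the inductive step, Adam's \textsc{0-move} leaves the configuration untouched, and a correct \textsc{simulating move} of Eve exactly implements the unique transition the deterministic machine $(Q,T)$ performs at the current configuration: increments and decrements of $c_1$ act on the first coordinate by $\pm 4$, those of $c_2$ act on the second coordinate by $\pm 1$, and zero-test transitions do not change the vector. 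Correctness means Eve picks the transition of $T'$ whose source flags agree with the true counter values and whose target flags agree with the updated ones, so both parts of the invariant are preserved; this reduces to a short case check over the six transition types.

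The conclusion now follows. Eve wins if and only if, after one of her moves, the counter vector is $(0,0)$; by the invariant this vector is $(4y,z)$ where $(y,z)$ is the current 2CM counter pair, and since $4y=0\iff y=0$ this occurs if and only if the run of $(Q,T)$ reaches a configuration in $Q\times\{(0,0)\}$. Conversely, if the 2CM run never reaches such a configuration, two scenarios are possible: the run is infinite, in which case the restricted play continues forever with the counter vector never equal to $(0,0)$ after Eve's turn; or the run halts at $\bot$ with a non-zero counter pair, in which case $\bot$ has no outgoing transition in $T'$, so no correct \textsc{simulating move} is available to Eve and in particular she never reaches the zero vector. In both subcases Eve fails to fulfil her objective, so Adam wins. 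The argument is essentially bookkeeping; the only point requiring care is checking that the flag updates in $Q'$ remain faithful after each transition, which is immediate from the definition of $T'$ together with the assumption that Eve plays only correct moves.
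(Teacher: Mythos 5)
Your proof is correct and takes the same approach the paper intends: the paper's own proof is a one-line appeal to the evident simulation correspondence between correct plays and the 2CM run, and your induction on rounds (with the flag-faithfulness invariant and the scaling of the first counter by $4$) is simply the routine elaboration of that correspondence. The only nitpick is a harmless off-by-one in your indexing, since the game's initial configuration already corresponds to the configuration of the machine \emph{after} its first step.
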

\begin{proof}
It easy to see that correct moves of Eve simulate the 2CM and that a configuration $(s,(0,0))$ of the 2CM is reachable if and only if it is reachable in 2RGS. 
\end{proof}

\begin{lemma}\label{2rgsevecheats}
If Eve plays an incorrect move, i.e., after her turn a flag does not match the counter value (i.e., the flag is $+$ while the counter is 0 or vice versa), Adam has a winning strategy starting with the \textsc{positivity check}.
\end{lemma}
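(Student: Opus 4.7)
The plan is to exhibit Adam's winning strategy explicitly. Adam plays the positivity check $(1, 0)$ as his first move, so that the first coordinate becomes $4c_1 + 1$, breaking the divisibility by $4$ that held during the prior correct simulation. Thereafter Adam plays greedily, choosing at each of his turns whichever of $(0, 0)$ and $(1, 0)$ does not leave the configuration in Eve's \emph{immediate winning set}, i.e., the set of counters from which Eve can reach $(0, 0)$ with one move. The first step of the proof is to list these winning sets. From the emptying states they are the pairs $\{(4, 1), (5, 1)\}$ for $\top_{++}$, $\{(4, 0), (5, 0)\}$ for $\top_{+0}$, $\{(0, 1), (1, 1)\}$ for $\top_{0+}$ and $\{(0, 0), (1, 0)\}$ for $\top_{00}$, coming from the two choices of $e \in \{0, 1\}$; from a simulating state $s_{ab}$ the winning counters form a subset of $\{(-4, 0), (4, 0), (0, -1), (0, 1), (0, 0), (1, 0)\}$ depending on which transitions of $T$ from $s$ are compatible with the flag in $s_{ab}$.

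I would then split the argument on which flag is mismatched. For a flag-$2$ mismatch it suffices for Adam to play $0$-moves after the initial positivity check. Eve cannot win while still simulating, since her simulating moves preserve the first coordinate modulo $4$ and the first coordinate equals $4c_1 + 1 \not\equiv 0 \pmod 4$; and once she transitions to emptying she is locked into the wrong half of the emptying partition. If her flag claims $c_2 > 0$ but actually $c_2 = 0$, she enters a state $\top_{a+}$ whose every move decreases the second coordinate by one, sending it below $0$ forever; if her flag claims $c_2 = 0$ but $c_2 > 0$, she enters $\top_{a0}$, where no available move changes the second coordinate and
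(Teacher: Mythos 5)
There is a genuine gap, and it sits exactly where the construction is most delicate. The strategy you assign to Adam for a flag-2 mismatch --- one \textsc{positivity check} followed by \textsc{0-move}s forever --- is not a winning strategy. Your argument tacitly assumes Eve must jump to the emptying partition from the very state in which the mismatch occurred, but the transition $(s_{ab},(-1,0),\top_{ab})$ is available from whatever state she is \emph{currently} in, and she may keep simulating first; a single further simulating move can re-synchronise the flag with the counter. Concretely, suppose Eve sits at $(s_{a0},(4c_1,c_2))$ with $c_2>0$ and the 2CM state $s$ carries a $c_2{\scriptscriptstyle ++}$ instruction. After your single check the configuration is $(s_{a0},(4c_1+1,c_2))$; Eve plays $(s_{a0},(0,1),t_{a+})$, reaching $(t_{a+},(4c_1+1,c_2+1))$ where both flags again match the counters; she then plays $(t_{a+},(-1,0),\top_{a+})$, absorbing your one pending $+1$, and from $\top_{a+}$ a passive Adam lets her drive both counters to $(0,0)$ using the $e=0$ emptying moves (subtracting $(4,1)$, then $(4,0)$ or $(0,1)$ as needed). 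The symmetric recovery works when the flag claims $c_2>0$ but $c_2=0$. This is precisely why the paper's Adam plays \emph{adaptively} --- \textsc{positivity check} if and only if the first counter is not $3\pmod 4$ --- so that Eve is permanently one unit behind on the first counter: each time she spends her unique $-1$, Adam has already injected another $+1$, and the first counter can never again be $0\pmod 4$ except at a value that is still too large (case $\top_{0b}$) or has been driven negative by the $-4-e$ moves (case $\top_{+b}$).

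Two further points. Your opening ``greedy'' strategy --- avoid Eve's \emph{immediate} one-move winning set --- is not justified and in fact fails: avoiding a loss on the next move does not prevent a loss in two moves unless you prove an invariant that Adam can always stay outside Eve's whole attractor, and in the recovery line above greedy play happily answers with \textsc{0-move}s until both of Adam's options land in the immediate winning set (e.g.\ at $\top_{++}$ with counters $(4,1)$, both $(4,1)$ and $(5,1)$ are one move from $(0,0)$). Finally, the proof is incomplete as written: the flag-1 mismatch, which is the case the paper analyses in detail (splitting on whether the $-1$ transition lands in $\top_{0b}$, where the first counter is stuck at a positive multiple of four, or in $\top_{+b}$, where it is driven negative), is announced but never carried out, and even the flag-2 discussion stops mid-sentence.
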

\begin{proof}
Assume that Eve made a mistake regarding the positivity of the first counter. As noted previously, there are two ways she can make a mistake. Either the configuration is $(s_{0b},(4x,y))$, where $x\geq1$ or $(s_{+b},(0,y))$. In both cases Adam plays his \textsc{positivity check} which changes the parity of the first counter. That is, after Adam's turn, the first counter is $1\pmod4$. It is easy to see that if Eve does not change the parity of the counter back to zero with her following turn, then Adam has a winning strategy. Indeed, he will play his \textsc{positivity check} if and only if the first counter is not $3\pmod4$. Eve cannot make the counter 0, as she cannot even make it $0\pmod4$. Thus Eve has to play a move adding $-1$ to the first counter. The only move for that is $(s_{ab},(-1,0),\top_{ab})$ which takes Eve to an emptying state. In the first case the emptying state is $\top_{0b}$ and all the transitions from it do not modify the first counter, i.e., Eve cannot reach $(0,0)$. In the second case the emptying state is $\top_{+b}$ where the next transition subtracts 4 from the first counter making it negative and there are no moves that increment the counters. Again, Eve cannot reach $(0,0)$. The case where Eve makes a mistake with the second counter is symmetric and is proven analogously.
\end{proof}

\begin{lemma}\label{2rgsadamcheats}
If Eve plays only correct \textsc{simulating moves} before Adam plays the \textsc{positivity check} for the first time, then Eve has a winning strategy starting with an \textsc{emptying move}.
\end{lemma}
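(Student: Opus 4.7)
The plan is to follow the scheme suggested by Figure~\ref{2rgs4ways}: once Adam breaks the simulation by playing his \textsc{positivity check}, Eve should transition irrevocably into the emptying partition of her automaton and then drive both counters to zero, matching Adam's contribution round by round. First I would identify the configuration at the moment Eve begins her emptying strategy. The hypothesis that Eve has been simulating correctly ensures that immediately before Adam's \textsc{positivity check} the game is in some configuration $(s_{ab},(4x,y))$ with $a=+\iff x>0$ and $b=+\iff y>0$. Adam's $(1,0)$ then yields $(s_{ab},(4x+1,y))$, and the connecting move $(s_{ab},(-1,0),\top_{ab})$ produces $(\top_{ab},(4x,y))$; this is the \textsc{emptying move} that starts Eve's strategy.

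Second I would prove by induction on the number of rounds that the invariant ``the configuration is of the form $(\top_{ab},(4x,y))$ with $a=+\iff x>0$ and $b=+\iff y>0$'' is maintained before each of Adam's subsequent turns, and that $x+y$ strictly decreases at every step. Given Adam's move $(e,0)$ with $e\in\{0,1\}$ (producing counter $(4x+e,y)$), Eve has an emptying move that cancels $e$ on the first counter and decrements one or both counters as appropriate: from $\top_{++}$ she plays $(-4-e,-1)$; from $\top_{+0}$ she plays $(-4-e,0)$; from $\top_{0+}$ she plays $(-e,-1)$. In each case Eve picks as the target the unique $\top$-state in $\{\top_{++},\top_{+0},\top_{0+},\top_{00}\}$ whose flags match the positivity of the new counter pair, which is always an allowed transition by the definitions of the emptying moves.

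Since $x+y$ strictly decreases at each round, after finitely many rounds the configuration reaches $(\top_{00},(0,0))$ just before one of Adam's turns; there Adam's $(e,0)$ gives $(\top_{00},(e,0))$, and Eve's response $(\top_{00},(-e,0),\top_{00})$ produces $(\top_{00},(0,0))$ after her turn, winning the game. The main concern is completeness of the case analysis: for every reachable flag $ab$ and every $e\in\{0,1\}$ one of the four bulleted families of emptying moves must both cancel $e$ and lead to a state whose flags still reflect counter positivity. This is exactly the reason the emptying moves were parameterised by $e\in\{0,1\}$, so the verification reduces to a direct pattern-match with the construction; the only subtle point is that Eve, not Adam, chooses the successor $\top$-state, which lets her preserve the invariant regardless of how Adam interleaves \textsc{0-moves} and \textsc{positivity checks}.
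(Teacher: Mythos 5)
Your proof is correct and follows essentially the same route as the paper: move into $\top_{ab}$ via the connecting $(-1,0)$ transition, then maintain the invariant that the configuration is $(\top_{ab},(4x,y))$ with flags matching counter positivity while cancelling Adam's $e\in\{0,1\}$ contribution and strictly decreasing the counters until $(\top_{00},(0,0))$ is reached. You merely spell out the induction and case analysis that the paper leaves implicit (the only cosmetic slip is the final redundant round: Eve has already won the moment the counters hit $(0,0)$ after her turn).
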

\begin{proof}
Similarly as in the previous proof, if Eve does not play an \textsc{emptying move}, then Adam has a winning strategy. Now, the configuration is $(s_{ab},(4x+1,y))$ after Adam's turn and Eve plays $(s_{ab},(-1,0),\top_{ab})$. From that point onward, Eve can empty the counters ensuring that the first counter is $0\pmod4$ and that the flags match the positivity of the counters. That is, every time Adam plays his \textsc{positivity check}, Eve plays an \textsc{emptying move} subtracting one from the first counter. Eventually, Eve will reach the configuration $(\top_{00},(0,0))$ and win the game.
\end{proof}

\begin{lemma}\label{2rgseveresponds}
If Adam plays only the \textsc{0-move} and Eve plays an \textsc{emptying move}. Adam has a winning strategy starting with the \textsc{0-move}.
\end{lemma}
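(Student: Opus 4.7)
The plan is to give Adam a strategy maintaining the invariant that after each of his turns the first counter is $\equiv 3\pmod 4$, which immediately prevents Eve from ever reaching $(0,0)$.

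First I would verify the base case. Since Adam has so far only played $\textsc{0-move}$s and Eve has only played correct \textsc{simulating moves}, and since the initial first counter equals $4y\in\{0,4\}$, each simulating move on $c_1$ changes the first counter by $\pm 4$ while all other simulating moves leave it unchanged, the first counter was a multiple of $4$ right before Eve's \textsc{emptying move}. That transition $(s_{ab},(-1,0),\top_{ab})$ subtracts $1$, so the first counter becomes $\equiv 3\pmod 4$. Adam's first strategic move is then the \textsc{0-move}, which preserves the residue, establishing the invariant.

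For the inductive step, Adam's strategy is: after each move of Eve, play \textsc{0-move} if the first counter is already $\equiv 3\pmod 4$, and \textsc{positivity check} otherwise. The key observation is that, by inspection of the four families of emptying transitions, every emptying move of Eve changes the first counter by one of $0,-1,-4,-5$, i.e., by $0$ or $3\pmod 4$. Hence, starting from residue $3$, Eve's move yields residue $3$ or $2$, and Adam's response restores residue $3$: the \textsc{0-move} keeps it, while the \textsc{positivity check} (which adds $1$) lifts residue $2$ back to $3$.

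Finally, I would note that the emptying partition of Eve's automaton has no transitions back to any simulating state, so Eve must try to win from an emptying state $\top_{ab}$, which requires the first counter to equal $0$ (hence $\equiv 0\pmod 4$) after her turn. The invariant forces the first counter to be $\equiv 2$ or $3\pmod 4$ after every Eve turn, so she can never win; thus Adam does. I expect no substantial obstacle: the whole argument reduces to the mod-$4$ closure check for the emptying moves noted above, combined with the one-way nature of the transition from the simulating to the emptying partition.
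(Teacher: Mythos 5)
Your proof is correct and takes essentially the same route as the paper's: the paper's two-line argument likewise notes that the first counter is $3\pmod 4$ after Eve's \textsc{emptying move} and has Adam play the \textsc{positivity check} exactly when the counter is not $3\pmod 4$, so that it never returns to $0\pmod 4$. You have simply made explicit the mod-$4$ closure check for the emptying transitions and the one-way structure of the emptying partition, which the paper leaves implicit by referring back to Lemma~\ref{2rgsevecheats}.
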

\begin{proof}
After Eve's move, the first counter is $3\pmod4$. As in proof of Lemma~\ref{2rgsevecheats}, Adam ensures that the first counter stays non-zero modulo four and wins the game.
\end{proof}

\begin{proof}[Proof of Theorem~\ref{2CMiff2RGS}]
Let $(A,E)$ be the robot game with states constructed in this section. Assume first that $(Q,T)$ reaches a configuration in $Q\times\{(0,0)\}$. Now by Lemma~\ref{2rgshalt}, Eve's winning strategy is to respond with the correct \textsc{simulating moves} if Adam plays the \textsc{0-move}, and if Adam plays a \textsc{positivity check}, then Eve has a sequence of moves described in Lemma~\ref{2rgsadamcheats} that leads to the configuration $(\top_{00},(0,0))$.

Assume then that $(Q,T)$ never reaches a configuration in $Q\times\{(0,0)\}$. We show that Eve does not have a winning strategy. If Adam plays only the \textsc{0-move}, then, by Lemma~\ref{2rgshalt}, Eve does not win by responding with just the correct \textsc{simulating moves}. Alternatively, if at some point, she plays either an incorrect \textsc{simulating move} or an \textsc{emptying move}, then by Lemmas~\ref{2rgsevecheats} and \ref{2rgseveresponds}, respectively, Adam has winning strategies making sure that a configuration with counter values $(0,0)$ is not reachable. As we analysed all the possible moves of Eve, we have shown that Eve does not have a winning strategy.
\end{proof}
By Theorems~\ref{Reach2CM} and \ref{2CMiff2RGS}, we have the following corollary regarding decidability of 2-dimensional robot games with states.

\begin{corollary}\label{2RGS}
Let $(A,E)$ be a robot game with states and $\bx_0$ be the initial vector. It is undecidable whether Eve has a winning strategy to reach $(0,0)$ from $\bx_0$. In particular, Adam is stateless and does not modify the second counter.
\end{corollary}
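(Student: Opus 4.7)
The plan is to obtain the corollary as an immediate consequence of the two preceding theorems, essentially by composing the reductions. Given an instance of the problem from Theorem~\ref{Reach2CM}, namely a deterministic two-counter machine $(Q,T)$, I would apply the construction used to prove Theorem~\ref{2CMiff2RGS} to produce a two-dimensional robot game with states $(A,E)$ together with the initial vector $\bx_0 = (4y,z)$ determined by the first step of the machine's run. By Theorem~\ref{2CMiff2RGS}, Eve has a winning strategy in this game from $\bx_0$ if and only if $(Q,T)$ reaches a configuration in $Q\times\{(0,0)\}$, and since the latter is undecidable by Theorem~\ref{Reach2CM}, so is the former.

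The only additional work for the corollary is to read off, from the construction already given in Section~\ref{sec:2RGS}, the two structural properties highlighted in the statement. First, Adam's state space is explicitly described as a single self-looping vertex, so Adam is stateless. Second, Adam's move set consists exactly of the \textsc{0-move} $(0,0)$ and the \textsc{positivity check} $(1,0)$; both vectors have $0$ in the second coordinate, so Adam never modifies the second counter. These are stated as part of the construction and require no further argument.

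There is no genuine obstacle here; the corollary is a matter of packaging. The only thing worth being slightly careful about is that the initial vector in Theorem~\ref{2CMiff2RGS} is not the all-zero vector (since otherwise Eve would trivially win on the first round), but this technicality is already handled in the construction by shifting to the configuration reached after one step of $(Q,T)$, and it has no effect on the undecidability conclusion: the decision problem for RGS allows an arbitrary initial vector $\bx_0$, which is precisely what the corollary's statement requires.
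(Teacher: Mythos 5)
Your proposal is correct and matches the paper's own justification, which simply derives the corollary from Theorems~\ref{Reach2CM} and \ref{2CMiff2RGS}; the structural observations about Adam (statelessness and the second coordinate of $(0,0)$ and $(1,0)$ being zero) are indeed immediate from the construction in Section~\ref{sec:2RGS}. No gaps.
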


\section{Stateless robot games in two dimensions}\label{sec:2RG}
In this section we prove the main result that it is undecidable whether Eve has a winning strategy in a two-dimensional robot game. We prove the claim by constructing a robot game that simulates a robot game with states. In some ways the construction is similar to the construction of a game with states in the previous section as can be seen in similarities of figures \ref{2rgs4ways} and \ref{2rg4ways}. On the other hand, the construction of the stateless game is more complex as the information on two counters, states and state transitions has to be embedded into two-dimensional vectors.

\begin{theorem}\label{2RGSiff2RG}
Let $(A_1,E_1)$ be a 2-dimensional robot game with states where Adam is stateless and does not modify the second counter. There exists a two-dimensional robot game $(A,E)$ where Eve has a winning strategy if and only if Eve has a winning strategy in $(A_1,E_1)$.
\end{theorem}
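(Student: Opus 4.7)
The plan is to embed the state structure of $(A_1, E_1)$ into the least significant residues of the two counters of a new stateless game $(A, E)$, replaying in this encoded setting the four-case argument of Section~\ref{sec:2RGS} with Eve playing either a \textsc{simulating} or an \textsc{emptying} move and Adam playing either a \textsc{0-move} or a \textsc{check}. Let $n = |V|$, fix an injective labelling $\sigma : V \to \{0, 1, \ldots, n-1\}$ with $\sigma(s_f) = 0$ for the designated winning state $s_f$ (we may assume such a unique $s_f$ exists, since the game produced by Corollary~\ref{2RGS} has $\top_{00}$ as its target state), and fix a scaling constant $M$ strictly larger than $n$ and than the largest absolute value of any coordinate of any move in $(A_1, E_1)$, so that the residue modulo $M$ (the ``low part'') and the integer quotient by $M$ (the ``high part'') of each counter evolve under disjoint portions of the moves. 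A configuration $(s, (x_1, x_2))$ of $(A_1, E_1)$ is then represented by the single vector $(M x_1 + \sigma(s), M x_2 + \sigma(s)) \in \Z^2$, so that the winning configuration $(s_f, (0, 0))$ corresponds exactly to $0_2$.

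Each transition $(s, (a_1, a_2), t) \in E_1$ lifts to the Eve move $(M a_1 + \sigma(t) - \sigma(s), M a_2 + \sigma(t) - \sigma(s)) \in E$, which, when applied to a faithfully encoded configuration at state $s$, produces the encoding of the intended successor at state $t$. Each Adam move $(a, 0) \in A_1$ lifts to $(M a, 0) \in A$, preserving both low parts and hence the state encoding (exploiting here the hypothesis that Adam does not touch the second counter). Under \emph{honest} play, in which Eve always applies the lift of a transition actually enabled at her simulated state, the bijection between plays of $(A, E)$ and plays of $(A_1, E_1)$ is immediate, so a winning strategy for Eve in the original game lifts directly to $(A, E)$.

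The main obstacle, as in Section~\ref{sec:2RGS}, is to catch Eve when she plays a move intended for a source state $s' \neq s$ she is not actually in. Such a cheat shifts both low parts by the same quantity $\sigma(s) - \sigma(s')$; thanks to $M > n$ this leaves the low parts equal modulo $M$ but no longer in sync with any genuinely reachable simulated configuration. I would enrich Eve with an \textsc{emptying} gadget analogous to the $\top_{ab}$ substructure of Section~\ref{sec:2RGS} and Adam with a \textsc{check} move that perturbs the first counter by a small residue forcing the simulated flag out of alignment with the counter value. The delicate point, and the one requiring the most care in the encoding, is that Adam is stateless and his move set does not depend on the alleged source state; the \textsc{check} must therefore detect every possible Eve cheat \emph{uniformly}, while an honest Eve must retain a scripted response (through emptying moves) that drives both counters to $0$ whenever Adam issues an unwarranted \textsc{check}.

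With such an encoding and auxiliary moves in place, the proof is completed by four lemmas that mirror Lemmas~\ref{2rgshalt}--\ref{2rgseveresponds}: (i) honest Eve against \textsc{0-move} Adam reproduces the original play and reaches $0_2$ iff $(A_1, E_1)$ reaches $(s_f, (0,0))$; (ii) a dishonest Eve move is punished by Adam's \textsc{check}, which leaves the play in a region from which $0_2$ is unreachable; (iii) an unjustified \textsc{check} by Adam after a fully honest Eve is answered by a deterministic sequence of \textsc{emptying} moves that reaches $0_2$; and (iv) an unjustified \textsc{emptying} move by Eve is punished by Adam's \textsc{0-move}, after which the counter residues can no longer be cleared. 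Assembling these lemmas exactly as in the proof of Theorem~\ref{2CMiff2RGS} yields that Eve wins $(A, E)$ if and only if she wins $(A_1, E_1)$, completing the reduction.
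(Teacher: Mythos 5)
Your high-level plan follows the paper's architecture (embed the state into low-order digits of the counters, then replay the four-case analysis of Section~\ref{sec:2RGS}), but the concrete encoding you chose is too weak, and the gap sits exactly where the real work of the proof has to happen. You represent the state $s$ by a single additive residue $\sigma(s)\in\{0,\dots,n-1\}$ in the low part of the counters, so a transition lift adds $\sigma(t)-\sigma(s)$. If Eve cheats by playing the lift of a transition $(s',\cdot,t)$ while actually at $s$, the low part becomes $\sigma(s)+\sigma(t)-\sigma(s')$, which can coincide with $\sigma(u)$ for a perfectly legitimate state $u$ (and, worse, a pair of wrong-source transitions such as $(s_j,\cdot,s_k)$ followed by $(s_k,\cdot,s_j)$ returns the low part to $\sigma(s_i)$ exactly while the counters have absorbed arbitrary updates). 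The paper explicitly identifies this failure mode and that is precisely why it does \emph{not} use an injective integer labelling: it uses a positional, characteristic-vector encoding in which state $s_i$ contributes $+8^i$ to the second counter, so that a wrong-source transition drives the coefficient of $8^{s'}$ to $-1$ --- an event that can never occur in honest play, that can only be worsened (never repaired) by further regular moves, and that Adam can punish with a targeted $\ch(i)$ move. Your proposal has no analogue of ``a coefficient went negative,'' and a stateless Adam cannot detect whether a residue in $\{0,\dots,n-1\}$ was reached honestly or not; your own phrasing (``no longer in sync with any genuinely reachable simulated configuration'') asserts a detectability that the encoding does not provide.

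Relatedly, you flag but do not resolve the central difficulty you yourself name: that Adam's check must catch every cheat \emph{uniformly} despite being stateless. In the paper this is resolved by the interplay of three devices you do not reconstruct: the base $8$ (leaving headroom so that coefficient arithmetic stays interpretable), the extra factor $4$ in the scaling $4\cdot 8^n$ (preventing the carry cheat where eight increments of the coefficient of $8^i$ masquerade as one increment of $8^{i+1}$, and partitioning the counter into intervals mod $4\cdot 8^n$ that force the correct responses), and the family of check moves $\ch(i)=(0,-5\cdot 8^i-8^n)$ together with Eve's matching state-defence moves routed through a doubled, self-loop-free emptying gadget $\top_{ab},\top'_{ab}$. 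Without specifying a detection mechanism of comparable strength, your Lemma~(ii) (dishonest Eve is punished) cannot be proved, and the whole reduction collapses at that case.
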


Similarly to the construction of Section~\ref{sec:2RGS}, the idea is that in the robot game, Eve and Adam simulate a play of the 2RGS. If one of the players deviates from the play, the opponent has a winning strategy from that point onward. 
In Figure~\ref{2rg4ways}, we present a schematic similar to Figure~\ref{2rgs4ways} depicting the possible ways two-dimensional robot games can go. Three of the outcomes have a predetermined winner which does not depend on the 2RGS. In the last case where Eve and Adam correctly simulate the 2RGS, the winner depends on the winner of the 2RGS, i.e., whether Eve has a winning strategy to reach $(s,(0,0))$, for any state $s$, or not.
\begin{itemize}
\item If Eve's move corresponds to a move in a play of the 2RGS, that we call a \textsc{regular move}, and Adam replies with his \textsc{regular move}, then iteratively applying only this turn-based interaction, Eve has a winning strategy if and only if she has a winning strategy in the corresponding 2RGS (Lemma~\ref{2rgwin}).
\item If Eve's move incorrectly simulates the 2RGS, then Adam has a winning strategy from this moment on starting with a \textsc{state-check} that makes Eve's target unreachable (Lemma~\ref{2rgevecheats}). 
\item On the other hand, if Adam plays his \textsc{state-check} following a correct \textsc{regular move} of Eve, then Eve has a winning strategy from this moment on starting with a \textsc{state-defence move} allowing Eve to empty both counters and reach $(0,0)$ (Lemma~\ref{2rgadamcheats}).
\item This leads to the possibility that Eve plays a \textsc{state-defence move} instead of a \textsc{regular move}, in that case Adam has a winning strategy starting by playing his \textsc{regular move} (Lemma~\ref{2rgeveresponds}).
\end{itemize}

\begin{figure}[htb]
\centering
\begin{tikzpicture}[scale=.8, every node/.style={scale=0.8}]
\path [fill=Grayish] (-0.5,-1.0) rectangle (8.1,1.5);
\draw [dotted] (-0.5,-1.0) -- (8.1,-1.0) -- (8.1,1.5) -- (-0.5,1.5) -- (-0.5,-1.0);
\node [circle,draw] (Eve) {$E$};
\node [rectangle,draw] (Adam)[right = 2cm of Eve] {$A$};
\node [rectangle,draw] (wrong)[left = 2cm of Eve] {$A$};
\node (AdamW1) [below = 2.7cm of wrong] {Adam wins};
\node [rectangle,draw](empty) [below = 1.2cm of Eve] {$A$};
\node (AdamW2) [below = 2.6cm of Eve] {Adam wins};
\node [circle,draw] (emptying) [below =1.2cm of Adam] {$E$};
\node (EveW) [below = 2.7cm of Adam] {Eve wins};

\draw (Eve) edge  [-stealth, bend left=30] node[align=center,above]{simulation\\(correct)} (Adam);
\draw (Adam) edge  [-stealth, dashed, bend left=30] node[below]{regular move} (Eve);
\draw (Eve) edge  [-stealth, bend right=30] node[align=center,above]{simulation\\(incorrect)} (wrong);
\draw (wrong) edge  [-stealth, dashed] node[align=right, left]{state-\\check} (AdamW1);
\draw (Eve) edge  [-stealth] node[align=right, left,pos=0.67]{state-defence\\move} (empty);
\draw (empty) edge  [-stealth, dashed] node[align=right, left]{regular\\move} (AdamW2);
\draw (Adam) edge  [-stealth,dashed] node[align=left, right,pos=0.65]{state-check} (emptying);
\draw (emptying) edge  [-stealth,] node[align=left, right,pos=0.4]{state-defence\\move} (EveW);
\node [align=right,right] at (6,-2.65) {\parbox{4.05cm}{Eve's moves:
\begin{itemize}
\item \textsc{simulation} of 2RGS (correct/incorrect)
\item \textsc{state-defence move}
\end{itemize}
Adam's moves:
\begin{itemize}
\item \textsc{regular move}
\item \textsc{state-check}
\end{itemize}}};
\node [align=left] at (6,0.3) {Eve wins if \\ \qquad Eve wins in 2RGS \\ Adam wins if \\ \qquad Adam wins in 2RGS};
\end{tikzpicture}

\caption{\label{2rg4ways}Progress of 2RG}
\end{figure}
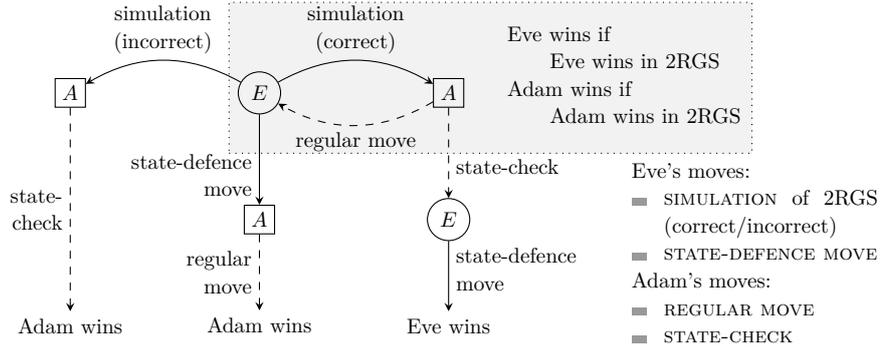

Intuitively, we encode the states as powers of $8$ such that the coefficient of $8^i$ is $1$ if and only if Eve's state in robot games with states  is $s_i$. When the state changes from $s_i$ to $s_j$, $-8^i+8^j$ is added to the second counter. Let $(s_i,(x,y))$ be a configuration in a two-dimensional robot game with $m$ states. Let us represent the state $s_i$ with $m$-dimensional characteristic vector $\mathbf{s_i}=(s_1,\ldots,s_m)$ where $s_i$ is 1 and $s_j=0$ for all $j\neq i$. We can now map $\mathbf{s_i}$ to an integer defined by the sum $\sum_{k=1}^m s_k8^k$. A transition from $s_i$ to $s_j$ can be simulated by adding $\sum_{k=1}^m s_k8^k$, where $s_k=-1$ if $k=i$, $s_k=1$ if $k=j$, and zero otherwise. Note that we represent states as coefficients of powers of eight because we need the extra space smaller bases do not possess. 

It is easy to see that this is not enough as incorrect transitions can result in a correct configuration. For example, if the configuration of the 2RGS is $(s_i,(x,y))$ and moves corresponding to $(s_j,(a,b),s_k)$ and $(s_k,(c,d),s_j)$ are used, the resulting configuration corresponds to $(s_i,(x+a+c,y+b,d))$. 
Another way to cheat is to use carries
as incrementing the coefficient of $8^i$ eight times is indistinguishable from incrementing the coefficient of $8^{i+1}$ once.
Both types of cheating can be countered with Adam's \textsc{state-checks}. 

We now show how we embed the states and state transitions into the second counter of the game. Similarly to how in the previous section we created additional space in the first counter by multiplying the moves modifying the first counter by four, we multiply the second counter by $4\cdot8^n$, where $n=m+7$ and $m$ is the number of states, creating enough space to store all the needed information of the underlying automaton. The multiplication by $4\cdot8^n$ rather than just $8^n$ has two purposes. The first one is similar to multiplying the first counter by four in the Section~\ref{sec:2RGS}. Namely, certain moves will move between different intervals modulo $4\cdot8^n$ ensuring the correct response from the opponent. 
This is illustrated in Figure~\ref{2RGinterval}. 
The second purpose is to ensure that above described cheating with carries is not possible.
A configuration in $Q\times\Z^2$ is mapped to a vector in $\Z^2$ by 
$
(s_i,(c_1,c_2))\mapsto(c_1,c_2\cdot4\cdot8^n+8^{i}).
$

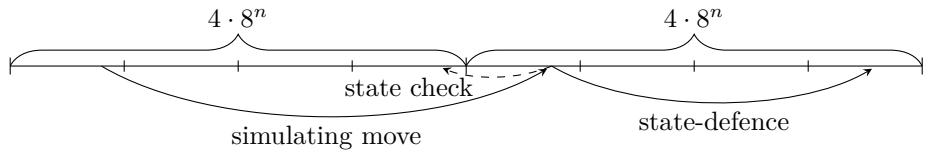
\begin{figure}[htb]
\centering
\begin{tikzpicture}[xscale=1.5,yscale=0.75]
\draw (0,0)--(8,0);
\draw (0,0.15) -- (0,-0.15);
\draw (1,0.1) -- (1,-0.1);
\draw (2,0.1) -- (2,-0.1);
\draw (3,0.1) -- (3,-0.1);
\draw (4,0.15) -- (4,-0.15);
\draw (5,0.1) -- (5,-0.1);
\draw (6,0.1) -- (6,-0.1);
\draw (7,0.1) -- (7,-0.1);
\draw (8,0.15) -- (8,-0.15);
\draw[decorate,decoration={brace,amplitude=12pt}] (0,0) -- (4,0) node[above, pos=0.5,yshift=0.4cm]{$4\cdot 8^n$};
\draw[decorate,decoration={brace,amplitude=12pt}] (4,0) -- (8,0) node[above, pos=0.5,yshift=0.4cm]{$4\cdot 8^n$};
\draw[shorten >=2pt,-stealth] (0.8,0) to [bend right =50] node[below, pos=0.5]{simulating move} (4.75,0) ;
\draw[shorten >=2pt,-stealth,dashed] (4.75,0) to [bend left =40] node[below left, pos=0.6,yshift=0.1cm]{state check} (3.75,0) ;
\draw[shorten >=2pt,-stealth] (4.75,0) to [bend right =50] node[below, pos=0.5]{state-defence} (7.6,0) ;
\end{tikzpicture}
\caption{An illustration of changes in interval when simulating or state-defence moves of Eve or state check of Adam is applied \label{2RGinterval}}
\end{figure}

Before presenting the detailed constructions of Eve's and Adam's moves, we note that we can assume that the 2RGS has the information on the positivity of the counters and players have to update the information correctly. Indeed, this was done in the previous section by using flags $0$ and $+$. Recall that because of this, the first counter is incremented and decremented by 4. By this assumption, we can denote the states of Eve by $s_{ab}$ as before. We also assume that Eve's automaton is without self-loops. Let $Q$ be the set of states of Eve in 2RGS. We create an emptying gadget for Eve similar to the one constructed in the previous reduction. To avoid self-loops, there are seven emptying states, $\{\top_{ab},\top'_{ab}\mid a,b\in\{0,+\}\}\setminus\{\top'_{00}\}$. The state $\top'_{00}$ is not needed as $\top_{00}$ will not have any moves from it. The moves in the emptying gadget are
as in the emptying gadget constructed in Section~\ref{sec:2RGS} but instead of a self-loops, the transitions are between primed and unprimed versions of the states.
\begin{itemize}
\item $\{(\top_{++} ,(-4,-1)-\alpha,t) \mid \alpha \in A_1,t\in\{\top'_{++},\top_{+0},\top'_{+0},\top_{0+},\top'_{0+},\top_{00}\}\}$; \\
 $\{(\top'_{++} ,(-4,-1)-\alpha,t) \mid \alpha\in A_1,t\in\{\top'_{++},\top_{+0},\top'_{+0},\top_{0+},\top'_{0+},\top_{00}\}\}$;
\item $\{(\top_{+0} ,(-4,0)-\alpha,t) \mid \alpha \in A_1,t\in\{\top'_{+0},\top_{00}\}\}$; \\
$\{(\top'_{+0} ,(-4,0)-\alpha,t) \mid \alpha \in A_1,t\in\{\top_{+0},\top_{00}\}\}$;
\item $\{(\top_{0+} ,(0,-1)-\alpha,t) \mid \alpha\in A_1,t\in\{\top'_{+0},\top_{00}\}\}$; \\
$\{(\top'_{0+} ,(0,-1)-\alpha,t) \mid \alpha\in A_1,t\in\{\top_{+0},\top_{00}\}\}$;
\end{itemize}

We denote $\mathcal{T}=\{\top_{++},\top'_{++},\top_{0+},\top'_{0+},\top_{+0},\top'_{+0}\}$.
We think of elements of $Q\cup\mathcal{T}\cup\{\top_{00}\}$ as integers in $\{0,\ldots,n-1\}$ such that $\top_{00}=0,\top_{0+}'=n-6,\top_{0+}=n-5,\top_{+0}'=n-4,\top_{+0}=n-3,\top_{++}'=n-2,\top_{++}=n-1$. We give names for update vectors that we often use:
\begin{alignat*}{2}
\add(1,x)&:=(x,0); & \move(j,k)&:=(0,-8^{j}+8^{k})\text{, for }0\leq j,k\leq n-1;\\
\add(2,x)&:=(0,4x\cdot8^{n}); &
\ch(i)&:=(0,-5\cdot8^{i}-8^{n})\text{, for }n-6\leq i\leq n-1.
\end{alignat*}
 The initial vector of the robot game is 
$\add(1,x)+\add(2,y)+\move(\top_{00},s)$, that is, $(x,4y\cdot8^{n}+8^{s}-8^0)$,
where $(s,(x,y))$ is the initial configuration in the robot game with states.
In the next example we illustrate how the update vectors modify the counters.
\begin{example}
Let $(A_1,E_1)$ be a two-dimensional robot game with states where Eve has two states, $s=1$ and $t=2$, and the initial configuration $(s,(1,0))$. Next we present a set of configurations in 2RG obtained from the corresponding initial configuration when we apply $\add(1,-1)$, $\add(2,1)$, $\move(s,t)$, $\ch(8)$ in succession:
{\small \begin{align*}
&\overbrace{(1,0\cdot4\cdot8^9}^{\text{2RGS counters}} + \overbrace{0\cdot8^8 + 0\cdot8^7 + 0\cdot8^6 + 0\cdot8^5 + 0\cdot8^4 + 0\cdot8^3}^{\T} + \overbrace{0\cdot8^2 + 1\cdot8^1}^{\text{states of 2RGS}} \overbrace{-1\cdot8^0)}^{\top_{00}}
\xlongrightarrow{\add(1,-1)}  \\
&(0,0\cdot4\cdot8^9 + 0\cdot8^8 + 0\cdot8^7 + 0\cdot8^6 + 0\cdot8^5 + 0\cdot8^4 + 0\cdot8^3 + 0\cdot8^2 + 1\cdot8^1 - 1\cdot8^0)
\xlongrightarrow{\add(2,1)} \\
&(0,4\cdot8^9 + 0\cdot8^8 + 0\cdot8^7 + 0\cdot8^6 + 0\cdot8^5 + 0\cdot8^4 + 0\cdot8^3 + 0\cdot8^2 + 1\cdot8^1 - 1\cdot8^0) 
\xlongrightarrow{\move(s,t)}  \\
&(0,4\cdot8^9 + 0\cdot8^8 + 0\cdot8^7 + 0\cdot8^6 + 0\cdot8^5 + 0\cdot8^4 + 0\cdot8^3 + 1\cdot8^2 + 0\cdot8^1 - 1\cdot8^0) 
\xlongrightarrow{\ch(8)}  \\
&(0,3\cdot8^9 - 5\cdot8^8 + 0\cdot8^7 + 0\cdot8^6 + 0\cdot8^5 + 0\cdot8^4 + 0\cdot8^3 + 1\cdot8^2 + 0\cdot8^1 - 1\cdot8^0).
\end{align*}}
\end{example}

Now we present the moves of the players. Adam has two types of moves: \textsc{regular moves} that correspond to the moves in the 2RGS and \textsc{state-check moves}, $\{\ch(i)\mid i\in \T\}$.
The moves of Eve correspond to moves in $E_1$ where incrementing and decrementing of the second counter is by $4\cdot8^n$ rather than by 1. Let $(s,(x,y),t)\in E_1$, then $\add(1,x)+\add(2,y)+\move(s,t)=(x,4y\cdot 8^n-8^s+8^t)\in E$. We call these moves \textsc{regular moves}. We also need a move for Eve to finish the simulation by removing any values corresponding to the automaton if the state is $s_{00}$. That is, we add moves $\{\move(s_{00},\top_{00})-\alpha\mid \alpha\in A_1\}$. The other type of moves, \textsc{state-defence moves}, are used to empty the counters. As in the previous construction, Eve will be able to cancel every Adam's move and decrement the counters at the same time.

Finally, we define moves connecting the simulating partition of Eve's automaton with the emptying partition. For each state $s_{ab}\in Q$ where $a,b$ are not both zero, Eve has a move $\{\move(s_{ab},k)-\ch(i) \mid (a,b)\in\{0,+\}^2\setminus\{(0,0)\},k\in\{\top_{ab},\top'_{ab}\},k\neq i,i\in\T\}$. For $s_{00}$, Eve has a move $\{\move(s_{00},\top_{00})-\ch(i)\mid i\in \T\}$.

\newlength{\myl}
\settowidth{\myl}{\small$\ch(i)$}
\newlength{\myla}
\settowidth{\myla}{Adam's}
\newlength{\mylb}
\settowidth{\mylb}{\small$\{\add(1,-4)+\add(2,-1)+\move(j,1)-\ch(i)\mid j\in\{\top_{++},\top'_{++}\}\}$}

{\small
\begin{center}
\begin{tabular}{|p{\myl}|c|}
\hline
\centering Adam's move & Eve's move \\
\hline
\multirow{5}{*}{$\alpha\in A_1$} & $\{\add(1,-4)+\add(2,-1)-\move(j,k)-\alpha\mid j,k\in\{\top_{++},\top'_{++}\},j\neq k\}$ \\
&$\{\add(1,-4)-\move(j,k)-\alpha\mid j,k\in\{\top_{+0},\top'_{+0}\},j\neq k\}$ \\
&$\{\add(2,-1)-\move(j,k)-\alpha\mid j,k\in\{\top_{0+},\top'_{0+}\},j\neq k\}$ \\
\cline{2-2}
& $\{\add(1,-4)+\add(2,-1)+\move(j,k)-\alpha\mid j\in\{\top_{++},\top_{++}'\},k\in\T, j\neq k\}$\\
\cline{2-2}
& $\{\add(1,-4)+\add(2,-1)+\move(j,1)-\alpha\mid j\in\{\top_{++},\top'_{++}\}\}$ \\
\cline{2-2}
& $\{\add(1,-4)+\move(j,1)-\alpha\mid j\in\{\top_{+0},\top'_{+0}\}\}$ \\
\cline{2-2}
& $\{\add(2,-1)+\move(j,1)-\alpha\mid j\in\{\top_{0+},\top'_{0+}\}\} $ \\
\hline
\hline
\multirow{5}{*}{$\ch(i)$} & $\{\add((1,-4e_1)+\add(2,-e_2)-\ch(i)\mid e_1,e_2\in\{0,1\}\}$ \\
\cline{2-2}
& \parbox[t]{\mylb}{$\{\add(1,-4)+\add(2,1)+\move(j,k)-\ch(i)\mid i,j\neq k, \\ \phantom{emp} j\in\{\top_{++},\top_{++}'\}, k\in\T\}$} \\
\cline{2-2}
& $\{\add(1,-4)+\add(2,-1)+\move(j,1)-\ch(i)\mid j\in\{\top_{++},\top'_{++}\}\}$ \\
\cline{2-2}
& $\{\add(1,-4)+\move(j,1)-\ch(i)\mid j\in\{\top_{+0},\top'_{+0}\}\}$ \\
\cline{2-2}
& $\{\add(2,-1)+\move(j,1)-\ch(i)\mid j\in\{\top_{0+},\top'_{0+}\}\} $ \\
\hline
\end{tabular}
\end{center}}

Next we prove which player has a winning strategy in the scenarios presented previously.

\begin{lemma}\label{2rgwin}
If both players only play \textsc{regular moves} and Eve plays only correct \textsc{regular moves}, then Eve has a winning strategy if and only if she has a winning strategy in two-dimensional robot games with states.
\end{lemma}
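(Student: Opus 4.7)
The plan is to establish a bijection between configurations of the restricted 2RG play and configurations of the 2RGS via the encoding $\phi:(s,(c_1,c_2))\mapsto (c_1,\,4c_2\cdot 8^{n}+8^{s}-8^{0})$. First I would verify, by expanding $\add(1,x)+\add(2,y)+\move(\top_{00},s_{\mathrm{init}})$, that the initial 2RG vector equals $\phi$ applied to the initial 2RGS configuration. Next I would check that $\phi$ is preserved by a single round in the restricted regime. Adam's regular move $\alpha=(a_1,0)\in A_1$ acts only on the first coordinate, exactly as in the 2RGS. A correct regular move of Eve, $\add(1,x')+\add(2,y')+\move(s,t)=(x',\,4y'\cdot 8^{n}-8^{s}+8^{t})$, corresponds to a 2RGS transition $(s,(x',y'),t)\in E_1$ and is playable only when the second coordinate carries $8^{s}$; a one-line calculation shows that after the round the second coordinate equals $4(c_2+y')\cdot 8^{n}+8^{t}-8^{0}$ and the first equals $c_1+a_1+x'$, matching $\phi(t,(c_1+a_1+x',c_2+y'))$. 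Induction on the number of rounds then yields a bijection between plays.

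Next I would match the winning conditions. The 2RG goal $(0,0)$ equals $\phi(\top_{00},(0,0))$, a winning 2RGS configuration. For the reverse correspondence, any 2RGS winning position $(s,(0,0))$ reached by correct regular play must have $s\in\{s_{00},\top_{00}\}$, since the flag invariant forced by correctness prevents a $+$-flag coexisting with a zero counter, and every other state in $Q\cup\T$ carries at least one such flag. If $s=\top_{00}$ the position already equals $(0,0)$ in 2RG; if $s=s_{00}$, Eve waits for Adam's next regular move $\alpha$ and replies with the finishing move $\move(s_{00},\top_{00})-\alpha$, which cancels $\alpha$ on the first coordinate and replaces $8^{s_{00}}$ by $8^{\top_{00}}=8^{0}$ in the second, reaching $(0,0)$. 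Combining both directions, Eve's 2RGS winning strategy lifts to a winning strategy in the restricted 2RG by playing the corresponding correct regular moves followed by at most one finishing move, and any restricted-2RG winning play pulls back through $\phi^{-1}$ to a winning 2RGS play.

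The main obstacle will be the mismatch between the two winning conditions: the 2RGS accepts any state once both counters vanish, whereas the 2RG demands exactly $(0,0)$, which forces the state-encoding offset $8^{\mathrm{state}}-8^{0}$ to vanish as well. The finishing moves $\{\move(s_{00},\top_{00})-\alpha\mid\alpha\in A_1\}$ are precisely the bookkeeping that closes this gap. To be confident that no alternative 2RG winning path evades the correspondence, I would invoke the structural invariant that under correct regular play the second coordinate retains the form $4c_2\cdot 8^{n}+8^{\mathrm{state}}-8^{0}$, so reaching $(0,0)$ uniquely determines both $c_2=0$ and $\mathrm{state}=\top_{00}$.
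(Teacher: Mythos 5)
Your proposal is correct and follows essentially the same route as the paper's (much terser) proof: regular moves preserve the encoding $(s,(c_1,c_2))\mapsto(c_1,4c_2\cdot 8^{n}+8^{s}-8^{0})$, so plays correspond bijectively, and the gap between the 2RGS goal $(s_{00},(0,0))$ and the 2RG goal $(0,0)$ is closed by the finishing move $\move(s_{00},\top_{00})-\alpha$. Your only loose phrasing is that a regular move is ``playable only when'' the second coordinate carries $8^{s}$ --- in a stateless game every move is always available, and it is the lemma's hypothesis of \emph{correct} play that guarantees the coefficient of $8^{s}$ is $1$ --- but under that hypothesis your argument is exactly the paper's.
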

\begin{proof}
It easy to see that \textsc{regular moves} of the players simulate the 2RGS and that Eve has a winning strategy to reach a configuration $(s_{00},(0,0))$ of the 2RGS if and only if she has a winning strategy to reach the vector $(0,0\cdot4\cdot8^n+8^{s_{00}}-8^{\top_{00}})$ in 2RG after which Eve wins by playing $\move(s_{00},\top_{00})-\alpha$, where $\alpha$ is the \textsc{regular move} played by Adam.
\end{proof}

\begin{lemma}\label{2rgevecheats}
If Eve plays an incorrect move, i.e., after her turn the coefficient of some $8^s$ is $-1$ or the coefficient of $8^{\top_{00}}$ is zero, Adam has a winning strategy starting with a \textsc{state-check}.
\end{lemma}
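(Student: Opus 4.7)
The plan is to mirror Lemma~\ref{2rgsevecheats} at the scale of the second counter. I would fix Adam's strategy as follows: from the moment after Eve's cheating turn, he repeatedly plays $\ch(i)$ for some fixed $i\in\T$ (any choice works; say $i=\top_{++}$). Each application of $\ch(i)$ subtracts $5\cdot 8^i+8^n$ from the second counter, leaving a residue of $-5$ at base-$8$ digit $i$ and simultaneously lowering the $8^n$-interval component by $1$. This plays the same role that Adam's $(1,0)$ move played in the one-dimensional reduction, where it shifted the first counter by $1\pmod 4$.

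The crux is then to argue that Eve must, at some point, respond with a move containing a $-\ch(i)$ summand: otherwise the negative residue at digit $i$ accumulates unboundedly and $(0,0)$ is never reached, because no other move of $E$ adds to digit $i$ in isolation. Inspecting the move table, the responses that cancel $\ch(i)$ fall into three families: (a) $\add(1,-4e_1)+\add(2,-e_2)-\ch(i)$; (b) the emptying-gadget moves $\add(1,-4)+\add(2,\pm 1)+\move(j,k)-\ch(i)$ with $j\in\{\top_{++},\top'_{++}\}$ and $k\in\T$; and (c) the bridge transitions $\move(s_{ab},k)-\ch(i)$ with $s_{ab}\in Q\setminus\{s_{00}\}$ and $k\in\{\top_{ab},\top'_{ab}\}$.

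Next I would show that none of (a)--(c) is capable of repairing the corrupted digit. Families (a) and (b) touch only the first counter, the $8^n$-interval component, and the $\T$-digits; in particular they do not modify the coefficient of any $8^s$ with $s\in Q$, nor the coefficient of $8^{\top_{00}}=8^0$. Family (c) does touch a $Q$-digit, but only by \emph{decrementing} it, so it can never raise a $-1$ coefficient back to $0$; moreover it never modifies digit $0$. Hence, in the cheat case where some $8^{s_0}$-coefficient is $-1$ with $s_0\in Q$, or in the cheat case where the $8^0$-coefficient is $0$ instead of $-1$, the wrong digit persists forever under Adam's strategy. Either Eve keeps cancelling $\ch(i)$ (and the bad digit remains), or she eventually lets a $\ch(i)$ go unanswered (and the residue at digit $i$ blocks the zero target); in either subcase $(0,0)$ is unreachable.

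The main obstacle I anticipate is the digitwise bookkeeping: one has to verify that the $\move$, $\add$, and $\ch$ summands act on pairwise disjoint base-$8$ positions, so that each move's effect can be read off digitwise without carry interference across the $Q$-slots, the $\T$-slots, the $8^n$-interval, and the $4\cdot 8^n$-component. This is precisely what the scaling by $4\cdot 8^n$ and the padding $n=m+7$ are designed to buy, since after the cheat all nonzero coefficients in the second counter have absolute value bounded by a constant that never propagates past a single base-$8$ position. Once this disjointness is spelled out, the three-way case analysis above becomes routine.
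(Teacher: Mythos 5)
Your high-level architecture (mirror Lemma~\ref{2rgsevecheats}, exploit that no move bundled with $-\ch(i)$ can increment a $Q$-digit) matches the paper, but the concrete strategy you give Adam is not the paper's and, as stated, does not support your own argument. First, playing a \emph{fixed} $\ch(i)$ \emph{unconditionally every turn} is problematic in two ways. (i) If Eve defers her answers, the residue at digit $i$ grows by $-5$ per round and the second counter drops by $8^{n}$ per round; after four unanswered checks Adam has contributed $-4\cdot 8^{n}$, which is exactly $\add(2,-1)$, i.e.\ indistinguishable from a legitimate decrement of the simulated second counter --- so unconditional checking can hand Eve the decrements she needs, and the unbounded residue at digit $i$ destroys the digit-disjointness (``no carry'') property that you explicitly rely on for the digitwise reading. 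The paper avoids both issues by making Adam \emph{adaptive}: he plays $\ch$ only when the second counter is outside $[3\cdot 8^{n},4\cdot 8^{n})\pmod{4\cdot 8^{n}}$ and a \textsc{regular move} otherwise, which pins the counter outside the winning residue class with bounded residues. (ii) Checking a \emph{fixed} slot rather than the currently occupied slot of $\T$ loses the paper's ``token chase'': the argument that Eve can never clear the $\T$-region depends on Adam playing $\ch(i)$ for exactly the occupied slot $i$, so that Eve's forced response $\move(j,k)-\ch(i)$ (with $k\neq i$) necessarily deposits a fresh nonzero coefficient at some other $k\in\T$.

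Second, your three-family case analysis covers only Eve's moves containing $-\ch(i)$, but Eve is not obliged to answer a check with such a move: she can play \textsc{regular moves}, and those are precisely the moves that \emph{do} increment $Q$-digits. A regular move $\move(t,s_0)$ (or a short chain of them along a cycle of her automaton) can restore the $-1$ at $8^{s_0}$ to $0$, so ``the wrong digit persists forever'' is not established by inspecting the $-\ch(i)$-families alone. Relatedly, your closing dichotomy is too coarse: letting one $\ch(i)$ go unanswered on a given turn does not permanently block the target, since Eve has the standalone move $-\ch(i)$ (family (a) with $e_1=e_2=0$) and can try to catch up later; ruling this out requires an invariant. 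The paper supplies exactly this: the interval invariant forces Eve to eventually play a \textsc{state-defence move} (any regular reply leaves her stuck in $[3\cdot 8^{n},4\cdot 8^{n})$), that move commits a token to $\T$, and from then on the adaptive token chase plus the fact that no $\T$-move touches $Q$-digits keeps the corrupted $Q$-digit unrepairable. You need both the interval argument and the occupied-slot targeting to close the proof.
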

\begin{proof}
First, we prove that Eve loses if a coefficient corresponding to a state of 2RGS is negative after one of her turns. A coefficient corresponding to a state of 2RGS can only be increased, namely incremented, by Eve's \textsc{regular moves}. Hence, if one of the coefficients becomes negative, then Adam wins by playing a \textsc{state-check} move. The reasoning is now similar to the usage of the \textsc{positivity check} in Lemma~\ref{2rgsevecheats}. We consider the second counter modulo $4\cdot8^n$. Before Adam's \textsc{state-check}, the configuration is in $[0,8^n) \pmod {4\cdot8^n}$ and after the check in $[3\cdot8^n,4\cdot8^n)\pmod{ 4\cdot8^n}$. If Eve does not play a \textsc{state-defence move} (a move containing a $\ch(i)$), then Adam has a winning strategy by playing a \textsc{state-check} if the second counter is not in $[3\cdot8^n,4\cdot8^n)\pmod{ 4\cdot8^n}$ and a \textsc{regular move} otherwise (recall that Adam's \textsc{regular moves} do not modify the second counter). Thus Eve has to play a \textsc{state-defence move} which does not make the negative coefficient non-negative. Now at least one of the coefficients in $\T$ is non-zero, say $i$. Adam will play $\ch(i)$ forcing Eve to play a move containing $-\ch(i)$ which will make another coefficient in $\T$ non-zero. As long as Adam keeps playing the correct \textsc{state-check}, Eve cannot make all the coefficients zero and thus cannot win.

The second case where a coefficient of some state in $\T$ is negative has been proven above. For the final case, where the coefficient of $8^{\top_{00}}$ is zero, we consider the next move of Eve. During her next turn, Eve has to play a move containing $\move(s,t)$ making the coefficient of $8^s$ negative, which has been covered previously.
\end{proof}

\begin{lemma}\label{2rgadamcheats}
If Eve plays only correct \textsc{regular moves} until Adam plays a \textsc{state-check} for the first time, then Eve has a winning strategy starting with a \textsc{state-defence move}.
\end{lemma}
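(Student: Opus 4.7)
The plan is to mirror the argument of Lemma~\ref{2rgsadamcheats}, starting Eve's emptying phase with the \textsc{state-defence move} that cancels Adam's first $\ch(i)$ and then replying to every subsequent Adam move with a suitable cancelling move from the emptying table. By hypothesis, just before Adam's first $\ch(i)$ the configuration is $(c_1,\,4c_2\cdot 8^n + 8^{s_{ab}} - 8^0)$, encoding a legitimate 2RGS configuration $(s_{ab},(c_1,c_2))$, and after Adam's move it becomes $(c_1,\,(4c_2-1)\cdot 8^n + 8^{s_{ab}} - 5\cdot 8^i - 8^0)$.

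First, I would describe Eve's opening reply: $\move(s_{ab},k)-\ch(i)$, with $k\in\{\top_{ab},\top'_{ab}\}$ chosen so that $k\neq i$ (always possible, as this set has two elements). This exactly cancels both the $-5\cdot 8^i$ and the $-8^n$ contributions of Adam, removes $8^{s_{ab}}$, and installs $8^k$ in its place, yielding $(c_1,\,4c_2\cdot 8^n + 8^k - 8^0)$. In the special case $s_{ab}=s_{00}$ one has $c_1=c_2=0$, so the analogous move $\move(s_{00},\top_{00})-\ch(i)$ already lands on $(0,0)$ and Eve wins. In the generic case Eve now sits inside the emptying gadget at $k$, with the physical counters intact and their signs matching the flags of $k$.

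From here I would prove, by induction on the lexicographic measure $(c_1+c_2,\,\text{index of }k')$, that from any configuration $(c_1,\,4c_2\cdot 8^n + 8^{k'} - 8^0)$ with $k'\in \T$, $c_1\equiv 0\pmod 4$, $c_1,c_2\ge 0$, and the flags of $k'$ matching the signs of $c_1,c_2$, Eve can respond to every Adam move with a move that preserves these invariants and strictly decreases the measure. For Adam's regular move $\alpha$, the upper block of the emptying table supplies a reply of the form $(\text{physical decrement compatible with the flags of }k')+\move(k',k'')-\alpha$, with $k''$ either the primed/unprimed flip of $k'$ (keeping the same flags) or a strictly lower element of $\T\cup\{\top_{00}\}$ (switching a $+$ flag to $0$ once the corresponding counter is about to vanish); for a \textsc{state-check} $\ch(i'')$ the lower block gives the same shape with $-\ch(i'')$ in place of $-\alpha$, and the constraint $k''\neq i''$ can always be honoured because at least two legal targets for $k''$ remain available. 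The measure strictly decreases at each of Eve's turns, so Eve reaches $k'=\top_{00}$ with $c_1=c_2=0$ in finitely many rounds, at which point the encoded vector is $(0,0)$.

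The main obstacle is the combinatorial verification underlying the inductive step: one must check that the emptying table really does cover every pair $(k',i'')$ Eve may need to handle, while keeping the choice of $k''\neq i''$ compatible with the counter signs and with the primed/unprimed alternation forced by the no-self-loop assumption on Eve's virtual automaton. This is routine case work guided by the structure of $\T$ and by the mod-$(4\cdot 8^n)$ accounting already used in the proof of Lemma~\ref{2rgevecheats}.
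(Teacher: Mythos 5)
Your proposal follows the paper's proof of this lemma essentially verbatim in structure: the opening reply $\move(s_{ab},k)-\ch(i)$ with $k\in\{\top_{ab},\top'_{ab}\}$, $k\neq i$, followed by pairing each subsequent Adam move with a cancelling-and-decrementing move from the emptying table until the configuration $(0,0)$ is reached. The extra detail you supply (the explicit post-move configuration $(c_1,4c_2\cdot8^n+8^k-8^0)$, the $s_{00}$ special case, and the lexicographic termination measure) only makes explicit what the paper leaves implicit, so the argument is correct and takes the same approach.
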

\begin{proof}
Similarly as in the previous proof, if Eve does not play a \textsc{state-defence move}, then Adam has a winning strategy. Now, Eve plays the \textsc{state-defence move} $\move(s_{ab},k)-\ch(i)$ where $s_{ab}$ is the non-zero coefficient, $\ch(i)$ is the \textsc{state-defence move} Adam played and $k\in\{\top_{ab},\top'_{ab}\}$, $k\neq i$. From that point onward, Eve can empty the counters ensuring as she has emptying moves with an opposite move of Adam. Eventually, Eve will reach the configuration $(0,0)$ and win the game.
\end{proof}

\begin{lemma}\label{2rgeveresponds}
If Adam plays only \textsc{regular moves} and Eve plays a \textsc{state-defence move}, then Adam has a winning strategy starting with a \textsc{regular move}.
\end{lemma}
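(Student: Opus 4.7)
The plan is to mirror the proof of Lemma~\ref{2rgevecheats}, with the roles of \textsc{state-check} and \textsc{state-defence} swapped. First I would compute the net effect on the second counter of Adam's \textsc{regular move} followed by Eve's \textsc{state-defence move} of the form $\move(s_{ab},k)-\ch(i)$: since this contains the summand $-\ch(i)=(0,5\cdot 8^i+8^n)$, the second counter modulo $4\cdot 8^n$ is shifted out of the ``clean'' interval $[0,8^n)$ into $[8^n,2\cdot 8^n)$, and the coefficient of $8^n$ in the integer expansion of $c_2$ is incremented by one.

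Next I would describe Adam's winning strategy: continue to play a \textsc{regular move} as long as the second counter modulo $4\cdot 8^n$ remains outside $[0,8^n)$, and otherwise play a \textsc{state-check} $\ch(i)$ for a suitable $i\in\mathcal{T}$, which adds $-5\cdot 8^i-8^n$ and sends an element of $[0,8^n)$ into $[2\cdot 8^n,4\cdot 8^n)$. Since Adam's \textsc{regular move} does not touch $c_2$, this maintains the invariant ``after every Adam turn, $c_2\pmod{4\cdot 8^n}\notin[0,8^n)$'', which in particular forces $c_2\ne 0$ at those moments.

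The main obstacle, and the heart of the argument, is ruling out any Eve strategy that defeats this invariant. The crucial observation is that Eve's non-$\ch(i)$ moves change $c_2\pmod{4\cdot 8^n}$ by at most $\sim 8^{n-1}$ in absolute value, so they cannot by themselves traverse an interval of length $8^n$, while her $-\ch(i)$-containing moves add approximately $+8^n+O(8^{n-1})$ and never provide a $-8^n$ shift. To formalise this, I would track the integer coefficient $c_n$ of $8^n$ in $c_2$: it starts as a multiple of $4$; Eve's $-\ch$ moves add $+1$ to it, Adam's \textsc{state-checks} subtract $1$, and every other contribution (namely $\add(2,\cdot)$) changes it by a multiple of $4$. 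For $c_2=0$ Eve needs $c_n=0$, hence her total $-\ch$ count must equal Adam's state-check count modulo $4$; but Adam's strategy only ever plays a state-check when Eve has already driven the counter into $[0,8^n)$, which by the preceding bound she can only do by applying a $-\ch$ move from interval $3$, and tracking these matched pairs shows that $c_n$ remains a positive integer not congruent to $0$ modulo $4$ whenever Adam is done playing. Combined with the interval invariant, this forces $c_2\ne 0$ throughout the play, so Adam wins.
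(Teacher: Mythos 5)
Your opening computation and overall plan coincide with the paper's proof: the paper likewise observes that the $-\ch(i)$ summand pushes the second counter from $[0,8^n)$ into $[8^n,2\cdot8^n)$ modulo $4\cdot8^n$, and then has Adam reuse the strategy of Lemma~\ref{2rgevecheats} to keep the counter from returning to $[0,8^n)$. Your observation that Eve's only source of a $+8^n$-scale shift is a $-\ch(i)$ summand, while none of her moves supplies $-8^n$, is also the right driving fact.

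The gap is in your third paragraph, where the actual work of excluding $c_2=0$ is supposed to happen. The inference ``for $c_2=0$ Eve needs $c_n=0$'' is not justified: $c_2$ is the sum of $c_n\cdot 8^n$ and the accumulated lower-order contributions $\sum_{i<n}b_i8^i$ coming from the $\pm5\cdot8^i$ parts of $\ch/{-\ch}$ and the $\pm8^j$ parts of $\move$; unless the $b_i$ are bounded so that $\bigl|\sum_{i<n}b_i8^i\bigr|<8^n$, these terms can carry into the $8^n$ position and compensate a non-zero $c_n$ --- exactly the ``cheating with carries'' the paper flags when motivating the factor $4\cdot8^n$. Your Adam chooses $\ch(i)$ for ``a suitable $i$'' by interval considerations alone, so nothing in your argument keeps the $b_i$ bounded, and the subsequent mod-$4$ bookkeeping of $c_n$ does not close. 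The paper's actual mechanism is different: in Lemma~\ref{2rgevecheats}, which the present proof invokes, Adam always plays $\ch(i)$ for an index $i\in\T$ whose coefficient is currently non-zero; this forces Eve to answer with the matching $-\ch(i)$ (any other reply leaves that coefficient non-zero and bounded while the interval invariant punishes ignoring the check), and each such answer makes a different $\T$-coefficient non-zero. Hence some bounded non-zero coefficient below position $n$ persists forever, so $c_2\not\equiv0\pmod{8^n}$ and in particular $c_2\neq0$; the interval invariant serves only to force Eve to keep answering. To repair your version you would need to add the missing bound on the lower-order coefficients, which in turn requires Adam's checks to target specific coefficients --- at which point you are back to the paper's argument.
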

\begin{proof}
Since all \textsc{state-defence moves} subtract $-8^n$ from the second counter, after Eve's move, the counter is in $[8^n,2\cdot8^n)\pmod{4\cdot 8^n}$. As in proof of Lemma~\ref{2rgevecheats}, Adam ensures that the second counter does not return to the interval $[0,8^n)\pmod{4\cdot 8^n}$.
\end{proof}

\begin{proof}[Proof of Theorem~\ref{2RGSiff2RG}]
Let $(A,E)$ be the robot game constructed in this section. Assume first that Eve has a winning strategy in $(A_1,E_1)$. Now, Eve's winning strategy in two-dimensional robot games is to follow the strategy of $(A_1,E_1)$ as long as Adam plays \textsc{regular moves} which is a winning strategy by Lemma~\ref{2rgwin}. If Adam plays a \textsc{state-check}, then Eve responds according to the winning strategy of Lemma~\ref{2rgadamcheats}. 

Assume then that Adam has a winning strategy in $(A_1,E_1)$ and Eve has a winning strategy in $(A,E)$. If Adam plays only \textsc{regular moves}, then by Lemma~\ref{2rgwin}, Eve does not win by playing just the correct the correct \textsc{simulating moves}. That is, Eve has to, at some point, either play an incorrect \textsc{simulation move} or play a \textsc{state-defence move}. By Lemmas~\ref{2rgevecheats} and \ref{2rgeveresponds}, Adam has winning strategies for both cases. As we analysed all the possible moves of Eve, we have shown that Eve does not  have a winning strategy.
\end{proof}

\begin{corollary}\label{2RG}
Let $(A,E)$ be a two-dimensional robot game and an initial vector $\bx_0$. It is undecidable whether Eve has a winning strategy to reach $(0,0)$ from $\bx_0$.
\end{corollary}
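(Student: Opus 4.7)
The plan is to obtain this corollary by composing the two reductions already established in this paper with the undecidability result for two-counter machines. The chain is straightforward: Theorem~\ref{Reach2CM} provides an undecidable problem (does a deterministic 2CM reach a configuration in $Q\times\{(0,0)\}$?); Theorem~\ref{2CMiff2RGS} reduces this question to the existence of a winning strategy for Eve in some 2-dimensional robot game with states $(A_1,E_1)$; and Theorem~\ref{2RGSiff2RG} further reduces that to the existence of a winning strategy for Eve to reach $(0,0)$ from some initial vector $\bx_0$ in a stateless 2-dimensional robot game $(A,E)$.

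The first step I would carry out is to check that the output of the first reduction satisfies the hypothesis of the second. Inspecting the construction in Section~\ref{sec:2RGS}, Adam has a single control state with self-loops labelled by the \textsc{0-move} $(0,0)$ and the \textsc{positivity check} $(1,0)$, so Adam is stateless and does not modify the second counter — exactly the hypothesis of Theorem~\ref{2RGSiff2RG}, and precisely what is already recorded in Corollary~\ref{2RGS}. Hence Theorem~\ref{2RGSiff2RG} applies directly to $(A_1,E_1)$, producing $(A,E)$ together with an initial vector $\bx_0=\add(1,x)+\add(2,y)+\move(\top_{00},s)$ computed from the initial configuration $(s,(x,y))$ of the intermediate RGS.

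Composing the two biconditionals then yields: Eve has a winning strategy to reach $(0,0)$ from $\bx_0$ in $(A,E)$ if and only if the underlying 2CM reaches a configuration in $Q\times\{(0,0)\}$. Since both constructions are effective — the parameters $m$ and $n=m+7$, as well as the explicit move tables of Section~\ref{sec:2RG}, are all computable from $(Q,T)$ — this is a many-one reduction from the problem of Theorem~\ref{Reach2CM}, and undecidability transfers to the robot game problem. There is no genuine obstacle here: the entire substance is contained in Theorems~\ref{2CMiff2RGS} and~\ref{2RGSiff2RG}, and the corollary is only their formal combination; the one detail that merits mention in the write-up is the compatibility of the intermediate RGS with the hypothesis of the second reduction, which is the reason Corollary~\ref{2RGS} was stated in its sharpened form.
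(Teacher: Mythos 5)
Your proposal is correct and matches the paper's own argument, which derives Corollary~\ref{2RG} by combining Corollary~\ref{2RGS} (equivalently, Theorems~\ref{Reach2CM} and~\ref{2CMiff2RGS}) with Theorem~\ref{2RGSiff2RG}; your explicit check that the intermediate game satisfies the hypothesis of Theorem~\ref{2RGSiff2RG} is exactly the point recorded in the sharpened statement of Corollary~\ref{2RGS}.
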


\noindent Corollary~\ref{2RG} follows from Corollary~\ref{2RGS} and Theorem~\ref{2RGSiff2RG}. 
It is possible to apply it to matrix games introduced in \cite{HHNP15CiE} to show undecidability in $\Z^{3\times3}$; see the proof in the Appendix.\\

\noindent
\textbf{Final remarks: }
The construction of robot games with states was first presented in the PhD thesis of one of the authors, \cite{Reichert-phd15}, where it was also proved that robot games in dimension three are undecidable. The undecidability of 2RG is proved by a new technique of embedding 
state transitions of a 2CM into integers. It would be interesting to see whether the same approach can be applied to other automata and games. It is not clear how to embed not only state transitions of an automaton, but also the input word.

Korec showed in \cite{K96} that there exists a universal Minsky machine with 32 instructions. The natural question of a universal game arises: Is it possible to construct a fixed robot game simulating a universal 2CM? This game would have fixed moves and only the initial vector would affect the result.
In \cite{HNP15}, it was proven that two-dimensional robot games where both players have two moves are decidable in polynomial time. Consider the machine with 32 instructions. We can construct a robot game from it and count the number of moves. Thus it is undecidable whether Eve has a winning strategy in a two-dimensional robot game where Eve has at least $2083$ moves and Adam has 8 moves.
%
\setcounter{degree}{\value{theorem}}
%

\bibliographystyle{plain}
\bibliography{references}

\newpage

\appendix
\section{Appendix}
\setcounter{temp}{\value{theorem}}
\setcounter{theorem}{\value{reach}}
\begin{theorem}
Let $(Q,T)$ be a deterministic two-counter machine. It is undecidable whether in the run of $(Q,T)$, a configuration in $Q\times\{(0,0)\}$ appears.
\end{theorem}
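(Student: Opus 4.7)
The plan is to reduce the halting problem for deterministic 2CM, undecidable by \cite{M67}, to the question of whether a configuration with both counters zero ever appears in the run (past the initial one, as justified by the paper's remark about the first move being an increment). Given $\mathcal{M}=(Q,T)$ with initial state $s_0$ and sink $\bot$, I would build a deterministic 2CM $\mathcal{M}'=(Q',T')$ that faithfully simulates $\mathcal{M}$ while maintaining the invariant that the first counter stays strictly positive throughout the simulation, and that launches a dedicated cleanup phase upon reaching $\bot$ which is guaranteed to yield a configuration in $Q' \times \{(0,0)\}$.

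The encoding I would use is $c_1' = 2c_1+1$ and $c_2' = c_2$. The run of $\mathcal{M}'$ begins in a preamble state whose sole transition is an increment of $c_1'$, bringing the counters to $(1,0)$, which is the encoding of $\mathcal{M}$'s initial configuration $(s_0,(0,0))$. Each transition of $\mathcal{M}$ is then translated into a small gadget: an increment of $c_1$ becomes two successive $c_1'{\scriptscriptstyle ++}$ transitions through an auxiliary state; a decrement/zero-test pair on $c_1$ at state $s$ is realised by first decrementing $c_1'$ into an intermediate state $s_1$ (always valid since $c_1' \geq 1$), then using a decrement/zero-test pair at $s_1$, so that the zero-test branch (original $c_1=0$, i.e., $c_1'=1$) routes through a state that re-increments $c_1'$ and proceeds to $t_2$, while the decrement branch (original $c_1\geq 1$) performs one more $c_1'{\scriptscriptstyle --}$ and proceeds to $t_1$. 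Operations on $c_2$ translate verbatim since $c_2'=c_2$. Each gadget respects the prescribed 2CM shape, so $\mathcal{M}'$ remains a deterministic 2CM.

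Upon entering $\bot$, which ceases to be a sink in $\mathcal{M}'$, I would attach a cleanup gadget consisting of two drain states $q_1, q_2$ and a fresh sink $\bot'$. A transition from $\bot$ steps into $q_1$, whose decrement/zero-test pair on $c_1'$ reduces $c_1'$ to zero and then passes control to $q_2$; then $q_2$ similarly drains $c_2'$ and transitions to $\bot'$ on zero. In the step immediately preceding $\bot'$, the configuration $(q_2,(0,0))$ is visited (or $(q_1,(0,0))$ already, if $c_2 = 0$ at halt), witnessing a configuration in $Q' \times \{(0,0)\}$. Hence $\mathcal{M}'$ exhibits such a configuration whenever $\mathcal{M}$ halts.

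The crux will be showing that $\mathcal{M}'$ never visits a $(q,(0,0))$ configuration during the simulation phase, which would spuriously suggest halting. This is where the $c_1' = 2c_1+1$ encoding earns its keep: a straightforward induction on the number of simulated steps shows that in every simulation or auxiliary state $c_1' \geq 1$, so $(q,(0,0))$ cannot arise before the cleanup phase. Since the cleanup phase is only reachable once $\mathcal{M}$ reaches $\bot$, the constructed $\mathcal{M}'$ satisfies the reachability property if and only if $\mathcal{M}$ halts, closing the reduction.
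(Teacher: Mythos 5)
Your overall strategy is the same as the paper's: reduce from the halting problem by shifting counter values so that the simulation never passes through a configuration with both counters zero, and append a drain gadget that empties both counters once $\bot$ is reached. However, your construction has a genuine gap exactly at the point you identify as the crux. You only shift the first counter ($c_1' = 2c_1+1$, $c_2'=c_2$), and your decrement/zero-test gadget for $c_1$ first decrements $c_1'$ into the auxiliary state $s_1$, where the stored value is $2c_1$. When the original machine is about to take a $c_1{\scriptstyle ==}0$ branch, this value is $0$; if at that moment $c_2=0$ as well, the run of $\mathcal{M}'$ visits $(s_1,(0,0))$ during the simulation phase. Since a non-halting machine can perfectly well perform a $c_1$ zero-test while $c_2=0$ (e.g.\ a machine that forever increments, decrements and zero-tests $c_1$ without touching $c_2$), your claimed invariant ``$c_1'\geq 1$ in every simulation or auxiliary state'' is false, and the ``only if'' direction of the reduction breaks: $\mathcal{M}'$ would witness a configuration in $Q'\times\{(0,0)\}$ without $\mathcal{M}$ halting.

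The repair is to shift \emph{both} counters (e.g.\ $c_i'=c_i+1$ for $i=1,2$), and when simulating a zero-test on counter $i$, temporarily shift only counter $i$ back down while the other counter remains $\geq 1$ throughout the gadget; this is precisely what the paper's proof does when it says it shifts the values ``to ensure that at no point both counters are zero'' and implements zero-checks by shifting one counter down, testing, and shifting it back up. A secondary, easily fixed issue: the first step of your gadget is a state with a single outgoing $c_1'{\scriptscriptstyle --}$ transition, which violates the prescribed deterministic 2CM shape (a decrement must be paired with a zero-test); you should pair it with a never-taken $c_1'{\scriptstyle ==}0$ branch, which is sound only once the positivity invariant actually holds.
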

\begin{proof}
Let $(Q,T)$ be a deterministic two-counter Minsky machine. We build $(Q',T')$ such that the run in $(Q,T)$ reaches a configuration in $\{\bot\}\times\N^2$ if, and only if, the run in $(Q',T')$ returns to a configuration in $Q\times \{(0,0)\}$. To do this, we shift the values of $(Q',T')$ to ensure that at no point both counters are zero and add emptying states used to empty the counters before reaching the halting state $\bot$. That is, a configuration $(q,(0,0))$ is reachable if and only if $q=\bot$ is reachable. Zero-checks and decrementing are implemented by shifting the counter value back down and then performing the zero-check after which the counter value is shifted back up. It follows that $(0,0)$ is reachable in $(Q',T')$ if and only if $(Q,T)$ halts.
\end{proof}

\setcounter{theorem}{\value{temp}}

\begin{corollary}
It is undecidable whether Eve has a winning strategy in a two-dimensional robot game where Eve has at least $2083$ moves and Adam has 8 moves.
\end{corollary}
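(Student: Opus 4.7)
The plan is to chain the three reductions developed in the paper, starting from an explicit universal deterministic two-counter Minsky machine. Take $\mathcal{U}$ to be Korec's universal 2CM with $32$ instructions \cite{K96}, whose halting problem is undecidable. Apply Theorem~\ref{Reach2CM} to obtain a 2CM $(Q,T)$ for which reachability of $Q\times\{(0,0)\}$ is undecidable, then feed $(Q,T)$ through Theorem~\ref{2CMiff2RGS} to obtain an equivalent 2RGS $(A_1,E_1)$, and finally through Theorem~\ref{2RGSiff2RG} to obtain the desired two-dimensional robot game $(A,E)$. Eve's winning-strategy problem in $(A,E)$ is undecidable by composition of the three equivalences; only the two move counts remain to verify.

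For Adam the count is immediate. The 2RGS of Theorem~\ref{2CMiff2RGS} has $|A_1|=2$, namely the \textsc{0-move} and the \textsc{positivity check}. The construction of Theorem~\ref{2RGSiff2RG} then gives Adam in $(A,E)$ one \textsc{regular move} per element of $A_1$ and one \textsc{state-check} $\ch(i)$ per $i\in\T$. Since $\T=\{\top_{++},\top'_{++},\top_{0+},\top'_{0+},\top_{+0},\top'_{+0}\}$ has $6$ elements, Adam has exactly $|A_1|+|\T|=8$ moves.

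For Eve the count is obtained by summing the families of moves from the two successive constructions. In the 2RGS, $|E_1|$ is the number of simulating moves, one per transition of the flagged machine $(Q',T')$ (each $c_i{\scriptscriptstyle ++}$ or $c_i{\scriptscriptstyle --}$ of $(Q,T)$ spawns $4$ flagged transitions, each $c_i{\scriptstyle ==}0$ spawns $2$), plus the $18$ moves of the emptying gadget of Section~\ref{sec:2RGS} and the $|Q'|=4|Q|$ connecting moves $(s_{ab},(-1,0),\top_{ab})$. Passing to the 2RG, Eve accumulates $|E_1|$ \textsc{regular moves}, terminating moves from $s_{00}$, $10$ connecting moves $\move(s_{ab},k)-\ch(i)$ per state $s_{ab}$ with $(a,b)\neq(0,0)$, and the tabulated \textsc{state-defence moves} ($22$ moves per $\alpha\in A_1$, totalling $44$, plus $110$ moves across the $\ch(i)$ families for a grand total of $154$). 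Instantiating with Korec's machine (after its Reach2CM transformation) and summing these contributions yields at least $2083$ Eve moves.

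The main obstacle is bookkeeping rather than conceptual depth: the precise value of $|E_1|$, and hence of Eve's total, depends on the breakdown of Korec's $32$ instructions into $c_i{\scriptscriptstyle ++}$, $c_i{\scriptscriptstyle --}$, and $c_i{\scriptstyle ==}0$ transitions, on the exact number of auxiliary states and transitions introduced by Theorem~\ref{Reach2CM} to shift the counters and drain them before halting, and on the resulting $|Q|$. The resolution is to fix Korec's machine explicitly, trace its instruction profile through the Reach2CM transformation to obtain concrete values of $|Q|$ and $|T|$, substitute into the families enumerated above, and perform the straightforward arithmetic check that Eve's move count reaches the stated bound of $2083$.
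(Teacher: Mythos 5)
Your overall route is the paper's: chain the reductions, instantiate with Korec's $32$-instruction universal machine, and count moves in the resulting game. Your count for Adam ($|A_1|+|\T|=2+6=8$) is correct and matches the paper. The gap is on Eve's side: you never actually produce her move count. The family-by-family enumeration you give is plausible (the $44+110=154$ state-defence moves are consistent with the table in Section~\ref{sec:2RG}), but the total also involves $|E_1|$, the connecting and terminating moves, and the number of states of the machine that enters the 2RGS construction, and you explicitly defer all of this to an unperformed ``bookkeeping'' step which, by your own admission, depends on the instruction profile of Korec's machine and on the blow-up of the transformation behind Theorem~\ref{Reach2CM} --- neither of which you fix. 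That deferred computation is the entire content of this corollary, so the proof is incomplete as written. The paper closes it by deriving a bound on Eve's moves of the form $58m+227$, where $m$ is the number of states of the $2$-counter machine: a worst-case bound that does not depend on the breakdown into increments, decrements and zero-tests, and which evaluates to exactly $2083$ at $m=32$.

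A second, related problem is the direction of the inequality you aim for. The construction yields an \emph{upper} bound of $2083$ on Eve's moves (the paper's proof states ``at most $2083$''), which is what makes the result a strengthening: undecidability already holds with this few moves. Your plan is to trace Korec's machine exactly and check that the count ``reaches'' $2083$, i.e., a lower bound; but an exact trace would generically give strictly fewer than $2083$ moves, since $2083$ arises from the worst case over instruction profiles (e.g., assuming every state contributes the maximal number of flagged transitions), and your argument would then fail to conclude. The fix is to abandon the exact trace and argue the profile-independent upper bound in terms of $m$ alone, as the paper does.
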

\begin{proof}
We count the number of moves each player has in the robot game constructed in the previous section. Adam has 8 moves and Eve has at most $58m+227$ moves, where $m$ is the number of states in the original 2-counter machine. Korec showed in \cite{K96} that there is a universal 2-counter machine with 32 instructions, which means that it has at most 32 states. Thus there are at most 2083 moves for Eve.
\end{proof}

\noindent\textbf{Matrix games:} We apply the result on two-dimensional robot games to matrix games introduced in \cite{HHNP15CiE}.
A \emph{matrix game on vectors} (or \emph{matrix game} for short) consists of two players, Eve and Adam, having sets of
linear transformations  $\{U_1,\ldots,U_r\}\subseteq \Z^{n \times n }$ and $\{V_1,\ldots,V_s\}\subseteq\Z^{n \times n }$
respectively, an \emph{initial vector} $\bx_0 \in \Z^n$ of the game representing the starting position, and a \emph{target vector} $\by\in\Z^n$. 
Starting from $\bx_0$, players move the current point by applying available linear transformations (by matrix multiplication)
from their respective sets in turns. 
%
The decision problem of the matrix game is to check whether there exist a winning
strategy for Eve to reach the target from the starting point (vectors in $\Z^n$) of the game. 
Note that in our formulation the vectors are horizontal and players multiply it from the right.

In \cite{HHNP15CiE}, it was proven that the game is undecidable starting from dimension four. By encoding $2$-dimensional robot game into matrices, we get a matrix game of dimension three for which it is undecidable whether Eve has a winning strategy. 

\begin{theorem}
It is undecidable whether Eve has a winning strategy in three-dimensional matrix games.
\end{theorem}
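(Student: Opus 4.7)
The plan is to reduce the two-dimensional robot game of Corollary~\ref{2RG} to a three-dimensional matrix game by encoding vector addition as multiplication by an affine matrix. Given a robot game $(A,E)$ with initial vector $\bx_0=(x_0,y_0)$, I would set the matrix game to have initial vector $(x_0,y_0,1)\in\Z^3$ and target vector $(0,0,1)\in\Z^3$. The third coordinate will serve as a homogeneous coordinate that remains pinned to $1$ throughout any play.

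For every move $(a,b)\in A$ of Adam I would put the matrix
\[
M_{(a,b)}=\begin{pmatrix} 1 & 0 & 0 \\ 0 & 1 & 0 \\ a & b & 1 \end{pmatrix}
\]
into Adam's set, and analogously for each $(a,b)\in E$ of Eve. Since the row vector satisfies $(x,y,1)\cdot M_{(a,b)}=(x+a,y+b,1)$, multiplication by $M_{(a,b)}$ implements addition of $(a,b)$ on the first two coordinates while preserving the third coordinate as $1$. Thus by induction on the number of rounds, the configurations of the matrix game are exactly the configurations of the robot game with a $1$ appended, and a play reaches the target $(0,0,1)$ in the matrix game if and only if the corresponding play reaches $(0,0)$ in the robot game.

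From this configuration-by-configuration correspondence, winning strategies transfer directly in both directions: a winning strategy for Eve in the robot game induces one in the matrix game by using $M_{(a,b)}$ whenever the robot-game strategy calls for playing $(a,b)$, and vice versa. Hence deciding the winner in the constructed three-dimensional matrix game is equivalent to deciding the winner of the original two-dimensional robot game, which is undecidable by Corollary~\ref{2RG}.

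The only subtlety, and what I would flag as the main thing to check carefully, is that the matrix game does not offer Eve or Adam any degrees of freedom that the robot game does not: because the third row of every matrix in the construction is $(a,b,1)$ and the third column is $(0,0,1)^T$, the third coordinate of the state vector is invariant under any sequence of the available matrices, so no player can exploit dimension three to reach $(0,0,1)$ by any means other than zeroing the first two coordinates via a legal sequence of additions. Once this invariance is observed, the bijection between plays is immediate and the reduction is complete.
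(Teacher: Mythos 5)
Your proposal is correct and is essentially the same reduction as the paper's: both encode the addition of a robot-game move as multiplication by a $3\times3$ matrix acting affinely on the two counters via a homogeneous coordinate pinned to $1$, with winning strategies transferring verbatim. The only difference is cosmetic --- the paper places the invariant coordinate in the second position (vectors of the form $(u,1,v)$ with target $(0,1,0)$) while you place it third --- which is just a permutation of coordinates.
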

\begin{proof}
Let $(A,E)$ be a two-dimensional robot game with initial vector $(x_0,y_0)$. For each move $(x,y)$ of Eve (Adam) in robot game, Eve (Adam) has a respective move
$\begin{psmallmatrix}
1 & 0 & 0 \\
x & 1 & y \\
0 & 0 & 1
\end{psmallmatrix}.$ The initial vector of matrix game is $(x_0,1,y_0)$ and the target is $(0,1,0)$. It is easy to see that adding a vector $(x,y)$ to a vector $(u,v)$ in robot games corresponds to matrix multiplication
\begin{align*}
(u,1,v)\begin{pmatrix}
1 & 0 & 0 \\
x & 1 & y \\
0 & 0 & 1 
\end{pmatrix}=(u+x,1,y+v).
\end{align*}
It is easy to see that, in the matrix game, Eve has a winning strategy to reach $(0,1,0)$ if and only if she has a winning strategy to reach $(0,0)$ in the robot game. Since the latter problem is undecidable, so is deciding whether Eve has a winning strategy in three-dimensional matrix games.
\end{proof}

\begin{figure}[htb]\centering
\scalebox{0.65}{
\begin{tikzpicture}
  
  \node[state] 		   (s00)                    {$s_{00}$};
  \node[state]         (s+0) [below = 0.3cm of s00] {$s_{+0}$};
  \node[state] 		   (s0+) [below = 0.3cm of s+0] {$s_{0+}$};
  \node[state]         (s++) [below = 0.3cm of s0+] {$s_{++}$};
  \node[state]         (t+0) [right = 1.5cm of s+0]       {$t_{+0}$};
  \node[state]         (t++) [right = 1.5cm of s++]       {$t_{++}$};
  \path[-stealth] (s00) edge              node[sloped,above] {$c_1{\scriptstyle ++}$} (t+0)
                  (s+0) edge              node[sloped,above] {$c_1{\scriptstyle ++}$} (t+0)
                  (s0+) edge              node[sloped,above] {$c_1{\scriptstyle ++}$} (t++)
                  (s++) edge              node[sloped,above] {$c_1{\scriptstyle ++}$} (t++);
\end{tikzpicture}
\qquad
\begin{tikzpicture}  
  \node[state] 		   (s00)                    {$s_{00}$};
  \node[state] 		   (s0+) [below = 0.3cm of s00]                   {$s_{0+}$};
  \node[state]         (t00) [right =1.5cm of s00]       {$t_{00}$};
  \node[state]         (t0+) [right =1.5cm of s0+]       {$t_{0+}$};

  \path[-stealth] (s00) edge              node[sloped,above] {$c_1{\scriptstyle ==}0$} (t00)
  (s0+) edge              node[sloped,above] {$c_1{\scriptstyle ==}0$} (t0+);
\end{tikzpicture}
\qquad
\begin{tikzpicture}
  
  \node[state]         (t00)        {$t_{00}$};
  \node[state]         (t+0) [below = 0.3cm of t00]       {$t_{+0}$};
  \node[state]         (t0+) [below = 0.3cm of t+0]       {$t_{0+}$};
  \node[state]         (t++) [below = 0.3cm of t0+]       {$t_{++}$};
  \node[state]         (s+0) [left = 1.5cm of t+0] {$s_{+0}$};
  \node[state]         (s++) [left = 1.5cm of t++] {$s_{++}$};

  \path[-stealth] (s+0) edge              node[sloped,above] {$c_1{\scriptstyle --}$} (t+0)
                  (s++) edge              node[sloped,above] {$c_1{\scriptstyle --}$} (t++)
                  (s+0) edge              node[sloped,above] {$c_1{\scriptstyle --}$} (t00)
                  (s++) edge              node[sloped,above] {$c_1{\scriptstyle --}$} (t0+);
\end{tikzpicture}}
\caption{\label{statettrans}Transitions modifying the first counter in the modified 2CM}
\end{figure}
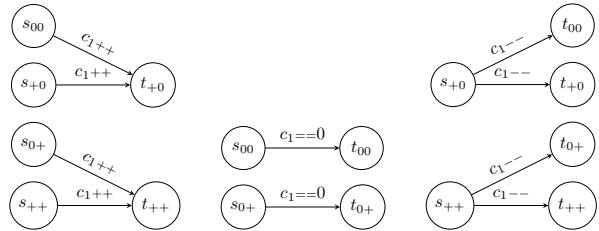

\end{document}